\documentclass[12pt]{article}
\usepackage{float}
\usepackage{placeins}
\usepackage{graphicx}
\usepackage{comment}
\graphicspath{ {figs/} }
\usepackage[english]{babel}
\usepackage{dsfont}
\usepackage[OT1]{fontenc}
\usepackage[utf8]{inputenc}
\usepackage{subcaption}
\usepackage{refcount}
\usepackage{lmodern}
\usepackage{booktabs}
\usepackage[dvipsnames]{xcolor} 
\usepackage{tikz}
\usepackage[normalem]{ulem}
\usepackage[symbol]{footmisc}

\let\href\undefined
\usepackage[colorlinks,citecolor=blue,urlcolor=blue,hyperfootnotes=false]{hyperref}

\usepackage{nameref,zref-xr}
\zxrsetup{toltxlabel} 
\usepackage{url}

\usepackage{mathtools}
\usepackage{amsfonts}
\usepackage{amssymb}
\usepackage{bbm}
\usepackage{amsthm}
\usepackage{amsxtra}
\usepackage{amsmath}
\usepackage{commath}
\usepackage{times}
\usepackage{bm}
\usepackage[linesnumbered,ruled,vlined]{algorithm2e}
\usepackage{cancel}

\usepackage[nameinlink]{cleveref}

\makeatletter%
\@ifclassloaded{biostatistics}{%
  \let\tiny\normalsize
  \let\scriptsize\normalsize
  \let\footnotesize\normalsize
  \let\small\normalsize
  \let\large\normalsize
  \let\Large\normalsize
  \let\LARGE\normalsize
  \let\huge\normalsize
  \let\Huge\normalsize
  \usepackage[inline,shortlabels]{enumitem}

  \usepackage[margin=1in]{geometry}

  \bibliographystyle{biorefs}

  \newcommand{\authorlist}{
    PHILIPPE BOILEAU$^\ast$, NIMA S.~HEJAZI$^\ast$, IVANA MALENICA$^\ast$,
    PETER B.~GILBERT,
    SANDRINE DUDOIT, MARK J.~VAN DER LAAN
    \\[4pt]
    \textit{
       Division of Biostatistics, School of Public
       Health, and Department of Statistics, University of California,
       Berkeley, 2121 Berkeley Way, Berkeley, CA 94720
    }
    \\[2pt]
    \textit{
       Department of Biostatistics, Harvard T.H.~Chan School of Public Health,
       677 Huntington Ave., Boston, MA 02115
    }
    \\[2pt]
     \textit{
       Vaccine \& Infectious Disease Division and Public Health Sciences
       Division, Fred Hutchinson Cancer Center,
       1100 Fairview Ave.~N., Seattle, WA 98109
     }
     \\[2pt]
    {\texttt{
      philippe\_boileau@berkeley.edu,
      nhejazi@hsph.harvard.edu,
      imalenica@berkeley.edu
    }}
  }

}{%
  \usepackage{arxiv}
  \usepackage{standalone}
  \usepackage{multirow}
  \usepackage{setspace}
  \usepackage{amsthm}
  \usepackage[round]{natbib}
  \usepackage[inline,shortlabels]{enumitem}
  \bibliographystyle{biometrika}

  \newtheorem{theorem}{Theorem}
  \newtheorem{corollary}{Corollary}
  \newtheorem{lemma}{Lemma}

  {\theoremstyle{definition}}
  {\theoremstyle{definition}}

  \newcommand{\authorlist}{
    Philippe Boileau\textsuperscript{\textdagger} \\
    Department of Epidemiology,\\
    Biostatistics  and Occupational Health,\\
    McGill University, and \\
    Research Institute of the McGill \\
    University Health Centre \\
    \texttt{philippe.boileau@mcgill.ca}
    \And
    Nima S.~Hejazi\textsuperscript{\textdagger} \\
    Department of Biostatistics, \\
    Harvard T.H.~Chan School of Public Health \\
    \texttt{nhejazi@hsph.harvard.edu}
    \And
    Ivana Malenica\textsuperscript{\textdagger} \\
    Department of Biostatistics, \\
    University of North Carolina at Chapel Hill \\
    \texttt{imalenic@unc.edu}
    \And
    Peter B.~Gilbert \\
    Vaccine \& Infectious Disease Division and \\
    Public Health Sciences Division, \\
    Fred Hutchinson Cancer Center \\
    \texttt{pgilbert@scharp.org}
    \And
    Sandrine Dudoit \\
    Department of Statistics and \\
    Division of Biostatistics, \\
    University of California, Berkeley \\
    \texttt{sandrine@stat.berkeley.edu}
    \And
    Mark J.~{van der Laan} \\
    Division of Biostatistics and \\
    Department of Statistics, \\
    University of California, Berkeley \\
    \texttt{laan@berkeley.edu}
  }
}

\newenvironment{assumptionidenalt}[1]{%
  \renewcommand{\@currentlabel}{\protect\ref*{#1}a}
  \renewcommand{\refstepcounter}[1]{}%
  \phantomsection
  \begin{assumptioniden}
}{%
  \end{assumptioniden}
}
\makeatother

\usetikzlibrary{shapes,decorations,arrows,calc,arrows.meta,fit,positioning}
\tikzset{
    -Latex,auto,node distance =1 cm and 1 cm,semithick,
    state/.style ={ellipse, draw, minimum width = 0.7 cm},
    point/.style = {circle, draw, inner sep=0.04cm,fill,node contents={}},
    bidirected/.style={Latex-Latex,dashed},
    el/.style = {inner sep=2pt, align=left, sloped}
}

\newcommand{\titlepaper}{Identifying direct causal effects under unmeasured
  exposure-mediator confounding}
\newcommand{\cofirst}{
  \textdagger~Co-first authors contributing equally (alphabetical ordering),
  to whom correspondence should be jointly addressed.
}

\AtEndDocument{\refstepcounter{theorem}\label{finalthm}}
\AtEndDocument{\refstepcounter{equation}\label{finaleq}}
\AtEndDocument{\refstepcounter{lemma}\label{finallemma}}

{\theoremstyle{definition} \newtheorem{assumptioniden}{}}

{\theoremstyle{definition} \newtheorem{assumptionb}{}}

\newcommand{\E}{\mathbb{E}}

\newcommand{\M}{\mathcal{M}}
\newcommand{\R}{\mathbb{R}}

\renewcommand{\P}{\mathbb{P}}
\newcommand{\Pt}{\mathsf{P}}
\newcommand{\I}{\mathbbm{I}}

\newcommand{\indep}{\mbox{$\perp\!\!\!\perp$}}
\DeclareMathOperator*{\argmin}{\arg\!\min}

\pdfoutput=1
\zexternaldocument*[supp:]{sm}

\makeatletter
\@ifclassloaded{biostatistics}{%
  \history{Received July 20, 2024?;
  revised N/A;
  accepted for publication N/A}

  \authorlist

  \markboth%
  {P.~BOILEAU, N.S.~Hejazi, I.~MALENICA et al.}
  {Identifying and Estimating Causal Direct Effects Under Unmeasured
  Confounding}
}{%
  \author{\authorlist}
}
\makeatother

\title{\titlepaper}
\begin{document}
\maketitle
\footnotetext{\cofirst}

\begin{abstract}
  Causal mediation analysis provides techniques for defining and estimating
  effects that may be endowed with mechanistic interpretations. With many
  scientific investigations seeking to address mechanistic questions, causal
  direct and indirect effects have garnered much attention.
  The natural direct and indirect effects, the most widely used among such
  causal mediation estimands, are limited in their practical utility due to
  stringent identification requirements. Accordingly, considerable effort has
  been invested in developing alternative direct and indirect effect
  decompositions with relaxed identification requirements. Such efforts often
  yield effect definitions with nuanced and challenging interpretations. By
  contrast, relatively limited attention has been paid to relaxing the
  identification assumptions of the natural direct and indirect effects.
  Motivated by a secondary aim of a recent non-randomized vaccine prospective
  cohort study (NCT05168813), we present a set of relaxed conditions under
  which the natural direct effect is identifiable in spite of unobserved
  baseline confounding of the exposure--mediator pathway; we use this result to
  investigate the effect mediated by putative immune correlates of protection.
  Relaxing the commonly used but restrictive cross-world counterfactual
  independence assumption, we discuss strategies for evaluating the natural
  direct effect in non-randomized settings that arise in the analysis of
  vaccine studies. We revisit prior studies of semi-parametric efficiency
  theory to demonstrate the construction of flexible, multiply robust
  estimators of the natural direct effect and discuss efficient estimation
  strategies that do not place restrictive modeling assumptions on nuisance
  functions. The proposed analytic approach is validated by numerical
  experiments and in an analysis of the motivating study with the aim of
  characterizing differences in activation of a well-studied immune correlate
  of protection for COVID-19 between non-randomized hybrid and vaccine immunity
  groups.
\end{abstract}

\section{Introduction}\label{intro}

While causal inference has placed a significant focus on delineating the total
effects of exposures, many causal effect definitions can fall short of providing
answers to mechanistic inquiries. When a scientific question concerns the effect
of an exposure on an outcome only through specific pathways between the two,
approaches investigating path-specific effects, through post-treatment variables
(known as ``mediators''), are of most interest; statistical approaches to
answering such questions fall under the umbrella of causal mediation analysis.
Recent examples include elucidating the biological mechanism by which a vaccine
reduces risk of infection or disease~\citep[for
example,][]{hejazi2020efficient, gilbert2022controlled, huang2023stochastic,
hejazi2023stochastic, gilbert2024four} or ascertaining the effects of
pharmacological therapies on substance abuse disorder relapse~\citep[for
example,][]{rudolph2021explaining, hejazi2022nonparametric,10.1093/aje/kwaf093}. One approach to
studying causal mediation analysis formalizes the partitioning of a total
effect into an \textit{indirect} effect, which captures the effect of the
exposure through pathways involving mediators identified \textit{a priori}, and
a \textit{direct} effect, which captures the remainder of the total effect, as
well as the development of appropriate techniques for robust statistical
inference about these mechanistically informative causal estimands.

Causal mediation analysis has a rich history, with the path analysis approach
of~\citet{wright1934method} perhaps being the earliest example. Subsequent
notable developments included parametric structural equation models~\citep[for
example,][]{goldberger1972structural, baron1986moderator} for statistical
mediation analysis, which garnered attention in the psychological and social
sciences. With the emergence of modern causal inference frameworks, including
the causally interpretable structured tree graph of~\citet{robins1986new} and
nonparametric structural equation models (NPSEMs)~\citep{pearl2000causality},
the necessary foundational tools to express causal mediation mechanisms without
reliance on restrictive parametric forms became available. As these and related
formalisms~\citep[for example,][]{spirtes2000causation} took root, nonparametric
formulations of mediation analysis uncovered significant limitations of earlier
efforts tied to parametric structural equation models~\citep{imai2010general},
demonstrating that the generally restrictive assumptions underlying such
approaches make them unsuitable for the study of the complex phenomena most
often of interest in the biomedical and health
sciences~\citep{nguyen2021clarifying}.

These modern developments in causal inference have yielded significant advances
over traditional mediation analysis techniques. A canonical example arose from
the works of~\citet{robins1992identifiability} and~\citet{pearl2001direct}, who used distinct frameworks to independently derive identification conditions for
path-specific effects defined by a nonparametric decomposition of the average treatment effect: the \textit{natural} (pure) direct and (total) indirect
effects (henceforth, ``(in)direct effects''). Identification of the natural
(in)direct effects requires several assumptions beyond those required for the corresponding total effect, though these assumptions---which include
unconfoundedness of the exposure--outcome, exposure--mediator, and
mediator--outcome relationships as well as a form of counterfactual
independence---are now considered standard in causal mediation analysis~\citep{vanderweele2015}. Despite the ubiquity of these assumptions for identification of these (in)direct effects, much effort has been aimed at
weakening these requirements. For example, the ``cross-world'' counterfactual independence assumption~\citep{andrews2020insights}, which requires independence of potential outcomes indexed by distinct interventions on exposure and
mediators, is incompatible with randomization~\citep{didelez2006direct,
robins2010alternative}, sharply limiting the applicability of these (in)direct
effect estimands. Put simply, by making corresponding scientific claims
impervious to falsification
by experimentation~\citep{popper1934logic}, this independence
assumption contradicts a central dictum of scientific inquiry. Such
counterfactual independencies are unsatisfied in the presence of intermediate
confounders affected by exposure as well~\citep{avin2005identifiability,
tchetgen2014identification}, which spurred study of the interventional
(in)direct effects~\citep[for example,][]{vanderweele2014effect,
vansteelandt2017interventional,diaz2020nonparametric,hejazi2022nonparametric}.
Recently, \citet{miles2023causal} showed that these (in)direct effect
definitions may, without stronger assumptions, fail to satisfy a mediational
sharp null criterion requiring that an indirect effect measure be null whenever
no individual-level indirect effect exists; thus, in practice, such
interventional effect measures carry the risk of misleading as to whether a
mechanistic indirect effect truly exists.

Another set of unique and critical assumptions is that of no unmeasured
confounding of the exposure--mediator and mediator--outcome
mechanisms~\citep[for example,][]{imai2010identification}. When considering
total effects, randomization renders such an assumption unnecessary by breaking
any putative causal link between baseline confounders and exposure, thereby
allowing for the effect of the exposure on the outcome to be disentangled from
those of any confounding factors. Put another way, randomization enforces
unconfoundedness of the exposure--outcome and exposure--mediator relationships
by design. The incompatibility of the natural (in)direct effects'
identification conditions with randomization exacerbates their reliance on
these assumptions. Given that such confounders are challenging to rule out in
practice, the practical utility of the natural (in)direct effects may be
limited. Accordingly, attention---though still limited---has been focused on
weakening these requirements. Recent examples of note include the work
of~\citet{vansteelandt2012natural}, who formulated novel (in)direct effects
arising from a decomposition of the average treatment effect on the treated;
that of~\citet{fulcher2019robust}, who provided a less stringent set of
assumptions for identification of an indirect effect; and that
of~\citet{rudolph2021complier}, who developed a robust direct effect among
compliers in a setting with instrumental variables. The (in)direct effects
of~\citet{vansteelandt2012natural} may be learned even in the presence of
unmeasured confounding of the exposure--mediator relationship while
\citet{fulcher2019robust}'s indirect effect remains identified under unmeasured
confounding of the exposure--outcome pathway, and~\citet{rudolph2021complier}'s
complier direct effect relies on randomization of an instrument to remain
robust to both observed and unobserved confounding of the exposure--outcome
relationship. 
Recently, \citet{Stensrud2024} studied identification and estimation of causal
effects of intervening variables (i.e., mediators) in settings where standard
assumptions fail, in particular due to unmeasured exposure--outcome confounding;
their approach, similar to that of~\citet{fulcher2019robust}, replaces
sequential ignorability with an identification strategy that leverages
intervening variables, closely related to the front-door
criterion~\citep{pearl2000causality}, which partitions a target effect along
pathways involving intervening variables.


These advances are particularly relevant for comparative vaccine studies of outcome risk,
which, as highlighted by the COVID-19 pandemic, is of critical importance to
public health. In vaccine research, a central goal is the characterization of
immune correlates of protection, that is, vaccine-induced immune markers that help to
elucidate immunological mechanisms and may be validated to serve as surrogate
endpoints for a relevant clinical outcome, thereby reducing reliance upon
resource-intensive randomized controlled trials~\citep{gilbert2024four}. Causal
mediation analysis permits the decomposition of vaccinations' protective effects
with respect to disease endpoints into direct and indirect effects, the latter
carried through candidate immune correlates of protection. Defining and
evaluating causal effects along pathways involving putative immune correlates
facilitates the mechanistic assessment of vaccine efficacy; moreover, scientific
insights grounded in such statistical evidence can inform the development of
immunologic targets for future vaccines, which may subsequently be interrogated
in clinical trials to ultimately advance the state of global public health.
Inference about these direct and indirect effects is often made difficult,
however, by unmeasured confounding. Sources of unmeasured confounding are
varied, including genetic factors predisposing individuals to heightened or
suppressed immune responses, unmeasurable-at-baseline immunocompromised status, prior infection, or, indeed, future infection risk,  to name but a few.

Our developments are motivated by their application to addressing a secondary
aim of the COVID-19 Prevention Network's (CoVPN) 3008 study (Ubuntu;
NCT05168813), a prospective study of Moderna's mRNA-1273 COVID-19 vaccine to evaluate the effect of two immunity conditions in preventing COVID-19
disease in people living with HIV or with comorbidities indicative of an
elevated risk of severe COVID-19 disease. In CoVPN 3008, study participants
received either one or two doses of the study vaccine based on evidence of
prior SARS-CoV-2 infection detectable by serology at
enrollment~\citep{garrett2025hybrid}. A critical secondary objective of this
study is to evaluate the activity of previously studied immune correlates of
protection for COVID-19 disease, such as neutralizing antibody titer, between
non-randomized groups that either received two doses of the mRNA-1273 vaccine
and lacked evidence of prior infection (vaccine immunity) or received one dose
of the mRNA-1273 vaccine but displayed evidence of prior infection (hybrid
immunity). Such a comparison informs as to how putative immune correlates of
protection may be related to the choice of vaccine regimen (as informed by
prior infection history and comorbidities) and COVID-19 disease outcomes
post-vaccination, thereby allowing for validated immune correlates of
protection to inform guidance on the basis of vaccine or hybrid immunity. For
such findings to be policy relevant, the characterization of immune correlates
of protection must be robust to the presence of unmeasured confounding of the
exposure--mediator relationship---for example, by neutralizing antibody titer
at baseline (prior to vaccination) or baseline immunoglobulin G (IgG) antibody
concentration against the viral N (nucleocapsid) protein that is not represented in the vaccine under study (and hence is informative of infection history). Even when such confounders
are documented, the use of such measurements as part of vaccine guidance is
neither economical nor practical.

To this end, we propose a novel identification result that permits inference
about the natural direct effect
in the presence of unmeasured confounding of the exposure--mediator
relationship that is of particular relevance to comparative vaccine studies. We outline in Section~\ref{sec:stat-prob}
the underlying complete
data and observed data models, and formally define our causal estimand---the
natural direct effect. We then provide our identification result in
Section~\ref{sec:identification}, which
corresponds to a previously studied statistical parameter. A review of
inference procedures for the natural direct effect is provided in
Section~\ref{sec:inference}. This is followed by a
corroborative simulation studies of our theoretical claims, in
Section~\ref{sec:sim},
and their applicability under outcome-dependent sampling schemes often used in
prevention cohort studies~\citep{follmann2006augmented}, in
Section~\ref{sec:applic-prospect}.
Following this, in
Section~\ref{sec:applic-covpn3008},
the proposed strategy is used to evaluate the proportion of the hybrid versus vaccine immunity effect
mediated through a well-studied immune correlate of
protection~\citep{gilbert2022covid, gilbert2024four} on subsequent COVID-19
disease involving a comparison of non-randomized groups. Section~\ref{sec:discussion}
concludes with a brief discussion of the implications
of our contributions for causal mediation analysis in vaccine studies.


\section{Formulation of the Statistical Problem}\label{sec:stat-prob}

\subsection{The Complete (Unobserved) Data Model}\label{causal_model}

Translation of the scientific question of interest into a causal parameter can
be facilitated by an NPSEM, or, equivalently, a structural causal model (SCM).
In specifying an SCM, we assume that each component of the data structure
is a function of an appropriate subset of the endogenous variables $\{W, V, A,
Z, Y\}$ and exogenous variables $\{U_W, U_V, U_A, U_Z,
U_Y\}$~\citep{pearl2000causality}. In temporal order, the endogenous variables
constitute baseline covariates $W \in \mathcal{W}$, \textit{unmeasured} baseline
confounders $V \in \mathcal{V}$, exposure $A \in \mathcal{A}$, mediators $Z \in
\mathcal{Z}$, and outcome $Y \in \mathcal{Y}$. In our analysis of CoVPN 3008, which follows results reported by~\citet{garrett2025hybrid}, $W$ includes
a pre-specified adjustment set consisting of enrollment region, enrollment
period, baseline HIV status, baseline TB status, and a baseline infection risk
score; $V$ is taken to be IgG concentration against the viral N
protein, informative of infection history and recency but unmeasured in most
study participants; $A$ is receipt of one dose of mRNA-1273 vaccine and
evidence of prior infection ($A=1$, that is, hybrid immunity) versus receipt of
two doses of mRNA-1273 vaccine and no evidence of prior infection ($A=0$,
that is, vaccine immunity); $Z$ is neutralization titer against the BA.4/BA.5
variant of SARS-CoV-2 (ID50, $\log_{10}$-scaled) four weeks after receipt of
the last vaccine dose; and $Y$ is an indicator of COVID-19 disease based on the
U.S.~CDC definition between 7 and 165 days after the day 29 visit post final
vaccination. We collectively refer to endogenous variables as $X=(W,V,A,Z,Y)$,
which corresponds to the \textit{complete}, but unobservable, data on a single
study unit. The unmeasured exogenous variables, denoted as $U = (U_W, U_V, U_A,
U_Z, U_Y)$, are sampled from the unknown joint probability distribution
$\Pt_U$. The following set of structural equations illustrates dependencies
allowed by the assumed time-ordering between the variables in the system under
study:
\begin{align}\label{SEM}
    W &= f_W(U_W)  \\
    V &= f_V(U_V) \nonumber \\
    A &= f_A(W,V,U_A) \nonumber \\
    Z &= f_Z(W,V,A,U_Z) \nonumber \\
    Y &= f_Y(W,V,A,Z,U_Y), \nonumber
\end{align}
where $\mathcal{F}_X=\{f_W,f_V,f_A,f_Z,f_Y\}$ denote deterministic functions of
unrestricted functional form. In practice, knowledge of the data-generating
experiment may be used to inform the structure of some of these
quantities---for example, in a randomized controlled trial (RCT), the form of
$f_A$ would be known exactly. Generally, knowledge of the forms of the
structural equations $\{f_W, f_V, f_Z, f_Y \}$ is rarely available or severely
limited. For further visual clarity, we depict relationships between variables
corresponding to the SCM~\eqref{SEM} in
Figure~\ref{fig:DAG_full}.

The SCM defines a collection of distributions over $(U, X)$, which constitute
the complete data model; accordingly, we denote the true probability
distribution of the pair $(U, X)$ as $\Pt_{(U, X), 0}$. Throughout the
remainder of the text, the naught subscript is used to indicate the
\textit{true}, or data-generating, probability distributions, or components
thereof. We denote by $\mathcal{M}^F$ the complete data model, noting that
$\Pt_{(U,X),0} \in \mathcal{M}^F$.

\begin{figure}[H]
    \centering
    \begin{tikzpicture}
    \node[state] (1) at (0,3.5) {$W$};
    \node[state] (2) at (0,2) {$Z$};
    \node[state] (3) at (0,-0.5) {$V$};
    \node[state] (4) at (-2,1) {$A$};
    \node[state] (5) at (2,1) {$Y$};
    \node[state,rectangle] (6) at (-2.5,3) {$U$};

    \path (1) edge (2);
    \path (1) edge (5);
    \path (1) edge (4);

    \path (4) edge (5);
    \path (4) edge (2);
    \path (2) edge (5);

    \path (3) edge (4);
    \path (3) edge (2);
    \path (3) edge (5);

    \path (6) edge[densely dotted] (1);
    \path (6) edge[densely dotted] (2);
    \path (6) edge[densely dotted] (3);
    \path (6) edge[densely dotted] (4);
    \path (6) edge[densely dotted] (5);

\end{tikzpicture}
    \caption{Directed acyclic graph (DAG) corresponding to SCM~\eqref{SEM}.
    We depict the endogenous variables with circles and exogenous variables
    with rectangles. For the sake of visual economy, we write all exogenous
    variables under the common set of exogenous factors $U$.}
    \label{fig:DAG_full}
\end{figure}
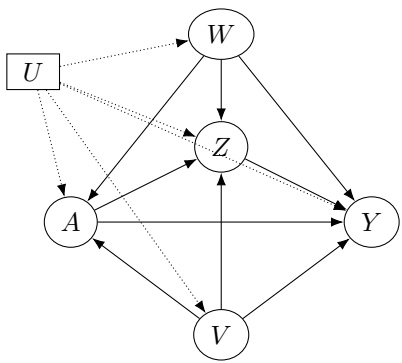

\subsection{The Observed Data Model}\label{statistical_model}

We consider access to cohort data consisting of independent and identically
distributed (i.i.d.) copies of the random variable $O = (W, A, Z, Y)$. We
emphasize that the random variable $O$ does not include unmeasured baseline
covariates $V$, though the complete data random variable $X$ does.
Let $O \sim \Pt_0 \in \M$, where $\Pt_0$ is the data-generating distribution of
$O$ and $\M$ is the \textit{observed data model}. We construct $\M$ without
constraining the form of $\Pt_0$---outside any knowledge possibly available on
the exposure mechanism, as may be the case in an RCT. As with the complete data
model, we may limit the observed data model in realistic ways by leveraging
scientific knowledge available about the system under study and the process
producing the data. We assume access to $n$ copies of $O$, comprising an
i.i.d.~sample $O_1, \ldots, O_n$, and we use $\Pt_n$ to denote the sample's
empirical distribution, which weights each observation equally. Estimates
derived from the empirical distribution are denoted with the subscript $n$
throughout. Given $\Pt_n$, we can estimate functionals of $\Pt_0$. Ideally,
then, these estimates also relate to the parameters of the complete data distribution,
$\Pt_{(U,X),0}$, endowing them with causal interpretations. This is impossible,
however, without further unverifiable identification conditions.

We now define various quantities of the observed data model which will be
subsequently used to define or prove these conditions. Let $g_A(W)$ denote the
conditional probability of exposure given baseline covariates, that is,
$g_A(W) \coloneqq \P(A = 1 \mid W)$, and denote the post-intervention
conditional density of the mediators as $g_Z(z \mid W, A)$. We also define
$\overline{Q}_Y(W, A, Z) \coloneqq \E[Y \mid W, A, Z]$ as the conditional
expectation of the outcome $Y$ given the variables that precede it in the temporal
ordering of the SCM. Finally, $p_{W}(w)$ represents the marginal density of $W$.
Nuisance parameters represented by ``$g$'' are those used to define potential
outcomes via hypothetical interventions.

\subsection{Static and Stochastic Interventions}\label{interventions}

The complete data model, $\M^F$, allows us to define counterfactual random
variables as functions of $(U, X)$ corresponding with the actions of specific
interventions on the system specified by the SCM~\eqref{SEM}.
These counterfactual quantities arise from restrictions on the inputs to
particular components of $\mathcal{F}_X$.
To take a common example, a \textit{static} intervention entails setting the
value of $A$ to a value $a$ in its support $\mathcal{A}$. Such an intervention
can be thought of as removing the equation$f_A$ from the SCM~\eqref{SEM} and
replacing its natural output with $a$ in all subsequent components of $f_X$.
This would then generate the counterfactual $Z(a)$, arising from $f_Z(W, V, a,
U_Z)$, and nested counterfactual $Y(a, Z(a))$, arising from $f_Y(W, V, a, Z(a),
U_Y)$. The counterfactual $Z(a)$ may be interpreted as the value that the
mediating variables ``naturally'' would take had the exposure been set to $a \in
\mathcal{A}$. Similarly, the nested counterfactual $Y(a, Z(a))$ may be
interpreted as the value the outcome would take had the exposure been set to $a$
and had the mediating variables taken the value $Z(a)$ arising ``naturally''
under the intervention that sets the exposure $A$ to value $a$ in its support.

Static interventions do not necessarily readily translate
into actionable policies. For example, the CoVPN biostatistics response team
used, among other techniques, a stochastic-interventional vaccine efficacy
measure~\citep{hejazi2023stochastic} derived from a modified treatment
policy~\citep{robins2004effects, haneuse2013estimation} to estimate
counterfactual infection risk~\citep{hejazi2020efficient}. Such an intervention
scheme does not require conceiving of setting individual immune responses to a
specific value for all study participants, instead respecting the fact that
vaccination alters a study participant's immune response relative to its
pre-vaccination natural value~\citep{sarvet2025natural}. This vaccine efficacy
measure has been used to evaluate putative immune correlates of protection in
several CoVPN vaccine efficacy trials, including the CoVPN COVE trial of the
mRNA-1273 vaccine~\citep{huang2023stochastic, hejazi2023stochastic,
gilbert2024four}.
A more flexible intervention scheme is given by the framework of
\textit{stochastic} interventions, in which the post-intervention conditional
distributions of intervention variables (for example, the exposure and/or
mediators) are replaced by user-defined candidate distributions~\citep[for
example,][]{stock1989nonparametric, didelez2006direct, diaz2012population,
diaz2020causal}. This general framework accommodates other intervention schemes
too. For example, static interventions may be viewed as a special case in which
the candidate post-intervention conditional distribution places all mass on the
intervention level of interest (for example, a degenerate, distribution at $a
\in \mathcal{A}$). In the preceding example, the stochastic intervention
corresponding to setting $A=a$ would simply replace $f_A$ with draws from a
distribution with all mass placed upon the exposure level of interest $a \in
\mathcal{A}$. The resulting estimands are associated with implausible
intervention strategies for select individuals, thereby providing limited
utility as targets of inference to support biomedical, health, and policy
decisions~\citep{haneuse2013estimation}.

\subsection{Defining the Natural (In)Direct Effects}\label{causal_sem}

Causal target parameters are defined as functionals of the complete data
distribution $\Pt_{(U, X), 0}$. In particular, we may express the total effect
of an intervention on the exposure, or the average treatment effect (ATE), as
\begin{equation}\label{eqn:ate}
  \Psi_{\text{ATE}}^F(\Pt_{(U,X),0}) = \E_{(U,X),0}[Y(1) - Y(0)] \ ,
\end{equation}
where $Y(0)$ and $Y(1)$ are \textit{potential
outcomes}~\citep{rubin1974estimating, rubin2005causal} of $Y$, that is, mutually
unobservable counterfactual quantities that arise when the exposure takes the
values $A = 0$ and $A = 1$, respectively. Using different
frameworks,~\citet{robins1992identifiability} and~\citet{pearl2001direct}
introduced a decomposition of the ATE into the natural direct and indirect
effects, which have been the subject of much study in causal mediation
analysis~\citep[for example,][]{avin2005identifiability,
didelez2006direct,tchetgen2014identification}. The natural (or pure) direct
effect (NDE) and natural (or total) indirect effect (NIE) are obtained via the
following canonical decomposition of the ATE:
\begin{align}\label{eqn:nat_decomp}
  \Psi_{\text{ATE}}^F(\Pt_{(U,X),0}) &= \E_{(U,X),0}[Y(1) - Y(0)] \nonumber \\
    &= \underbrace{\E_{(U,X),0}[Y(1, Z(1)) - Y(1, Z(0))]}_{\text{NIE}} +
       \underbrace{\E_{(U,X),0}[Y(1, Z(0)) - Y(0, Z(0))]}_{\text{NDE}} \ ,
\end{align}
where $Y(a) = Y(a, Z(a)) = f_Y(W, V, a, Z(a), U_Y)$ and $Z(a) = f_Z(W, V, a,
U_Z)$ are the counterfactuals of the outcome and mediators, respectively, under
the static intervention that sets $A=a$. In particular, $Y(a)$ is the
counterfactual outcome under exposure level $a$ and the value of $Z$ that arises
under that intervention, $Z(a)$. In Equation~\eqref{eqn:nat_decomp}, the
decomposition term $Y(1, Z(0))$ is used to partition the total effect into paths
involving the mediators and those avoiding the mediators. Note that the NIE
corresponds to the former, capturing the portion of the total effect that passes
through the mediators $Z$, while the NDE merely captures the remainder of the
total effect---through all pathways that do not involve $Z$. Importantly, this
means that the substantive interpretation of the NIE is \textit{usually} more
specific than that of the NDE.

Examination of the decomposition of the ATE into the NDE and NIE reveals
that the natural (in)direct effects are defined by a joint static intervention
that gives rise to the decomposition term $Y(1, Z(0))$. This
counterfactual arises from the static intervention $A = 1$ and the
static intervention that sets the mediators $Z$ to the value that would have
been observed had the exposure been set to the opposite condition, that is,
$Z(0) = f_Z(W, V, 0, U_Z)$. While technically convenient, such joint static
interventions are logically problematic, requiring an incompatibility between the
interventions applied to $A$ and $Z$.~\citet{robins1992identifiability} resolved
this issue by introducing a form of ``cross-world counterfactual independence,''
which has been the subject of much debate and discussion in the causal mediation
analysis literature~\citep[for example,][]{avin2005identifiability,
robins2010alternative, diaz2020causal, andrews2020insights}. An alternative
intervention scheme, described by~\citet{vdl2008direct}, instead employs the
familiar static intervention to the exposure but couples this with a stochastic
intervention on the mediators. In particular, the proposed intervention scheme
is characterized as jointly applying a static intervention on the exposure $A=a$
and drawing mediator values from $Z^{\star} \sim \P(Z < z \mid W, 0)$, the
cumulative distribution function constructed from $g_{Z}(z \mid W, 0)$. In this
intervention scheme, the counterfactual exposure value is deterministically set
while the counterfactual mediator value is drawn from the specified conditional
distribution. This joint intervention does not require setting $Z$ through
$f_{Z}(W, V, 0, U_Z)$, thus avoiding the logical incompatibility embedded in the
foundational developments of~\citet{robins1992identifiability}. Under this joint
intervention, the NDE may be defined as
\begin{equation}\label{eqn:nde_stoch}
  \Psi_{\text{NDE}}^F(\Pt_{(U, X), 0}) =
    \E_{(U,X),0}[Y(1, Z^{\star}) - Y(0, Z^{\star})] \ .
\end{equation}
This clarifies that the NDE may be viewed as the contrast of the counterfactual
mean outcomes between the static interventions deterministically setting the
exposure to $A = 1$ and $A = 0$ while simultaneously drawing the mediators $Z$
from a cumulative distribution function (CDF) constructed from $g_Z(z \mid W,
0)$. Of course, as previously noted, the static interventions applied to the
exposure $A$ may also be viewed as stochastic interventions in which the
post-intervention conditional distribution of the exposure is degenerate,
placing all mass on the exposed condition.

\subsection{Analogs of Natural (In)Direct Effects in Vaccine Efficacy Studies}

Convention regarding the statistical assessment of vaccine efficacy (VE) defines
efficiency parameters typically as $\psi_{\text{VE}} = 1 - \psi_{\text{RR}}$,
where $\psi_{\text{RR}}$ is a risk ratio (RR) contrasting the study arms' risks.
Adhering to this convention, we follow~\citet{gilbert2021immune}
and~\citet{benkeser2021inference}, who express the total effect of vaccination
in terms of this risk ratio and further decompose this quantity into analogs of
the natural (in)direct effects:
\begin{equation}\label{eqn:rr_nat_decomp}
  \psi_{\text{RR}} = \frac{\P(Y(1, Z(1)) = 1)}{\P(Y(0, Z(0)) = 1)} =
    \underbrace{\frac{\P(Y(1, Z(0)) = 1)}{\P(Y(0, Z(0)) = 1)}}_{\text{RR NDE}}
    \underbrace{\frac{\P(Y(1, Z(1)) = 1)}{\P(Y(1, Z(0)) = 1)}}_{\text{RR NIE}} \ .
\end{equation}
Unlike the canonical decomposition of the ATE into the NDE and NIE, as presented
in Equation~\eqref{eqn:nat_decomp}, the decomposition of the RR is performed on
a multiplicative scale. Note that, to recover an additive decomposition
analogous to that of Equation~\eqref{eqn:nat_decomp}, one need only consider
instead $\log\psi_{\text{RR}}$. Inspection of Equation~\eqref{eqn:rr_nat_decomp}
reveals that the familiar decomposition term $\P(Y(1, Z(0)) = 1)$ appears in the
numerator of the RR NDE and in the denominator of the RR NIE. Identification and
efficient estimation of this decomposition term are therefore central to
utilizing these (in)direct effects in developing a mechanistic assessment of
vaccine efficacy through candidate immune correlates of protection.

Importantly, both~\citet{gilbert2021immune} and~\citet{benkeser2021inference}
make use of standard identification assumptions to establish conditions under
which the RR NDE may be learned from observed data. This includes the
complete set of assumptions about no unmeasured confounding---that all potential
confounders of $A$, $Z$, and $Y$ must be measured. Our developments
significantly relax these requirements, allowing for the possibility of
unmeasured-at-baseline confounders $V$ that affect the $A$--$Z$ pathway. In
vaccine studies where the exposure of interest is non-randomized, as is the case
when contrasting hybrid and vaccine immunity in the CoVPN 3008 study, $V$ may
represent, for example, unknown or unmeasured-at-baseline prior infection
history. Injecting this additional degree of flexibility significantly extends
the scenarios in which these multiplicative mediational effects can be inferred,
in turn strengthening the evidence generated through comparative vaccine
studies' immune correlates analyses.

\section{Identification}\label{sec:identification}

\subsection{Standard Identification Assumptions}\label{subsec:ident:ss}

We stress that the complete data distribution, $\Pt_{(U, X), 0}$, determines a
corresponding distribution $\Pt_0$ within the observed data model $\M$. By
defining the complete data parameter of interest in terms of interventions on
the SCM, and thereby linking the complete data model and the observed data, we
may obtain inference about these causal effects using the observed data.
Standard conditions under which the natural (in)direct effects are identified
have been addressed extensively in prior work~\citep[for
example,][]{robins1992identifiability, petersen2006estimation, imai2010general,
robins2010alternative, pearl2011, fulcher2019}. Typically, identification of
the natural (in)direct effects relies on strong assumptions regarding the
absence of unmeasured confounding along several pathways. These assumptions
render causal mediation analysis difficult to implement, justify, and interpret,
as, altogether, they require no unmeasured confounding of the exposure--outcome,
exposure--mediator, and mediator--outcome pathways, as well as no
exposure-induced confounding of the mediator--outcome
pathway~\citep{fulcher2019, diaz2020nonparametric, hejazi2022nonparametric,
miles2023causal}. Next, we outline the typical assumptions necessary for
identification of the NDE and NIE:%
\begin{assumptioniden}[\textit{Exposure--outcome conditional randomization}]
\label{ass:original_A1}
  $Y(a) \indep A \mid W~\forall~a \in \mathcal{A}$.
\end{assumptioniden}
\begin{assumptioniden}[\textit{Exposure--mediator conditional randomization}]
\label{ass:original_A2}
  $Z(a) \indep A \mid W~\forall~a \in \mathcal{A}$.
\end{assumptioniden}
\begin{assumptioniden}[\textit{Mediator--outcome conditional randomization}]
  \label{ass:original_A3}
  $Y(a) \indep Z(a^{\star}) \mid (W, A=a)~\forall~(a,a^{\star}) \in
  \mathcal{A}$.
\end{assumptioniden}
\begin{assumptioniden}[\textit{Positivity for the exposure}]\label{ass:posA}
$\epsilon_A < g_A(W) < 1-\epsilon_A$ almost surely (a.s.) for some $\epsilon_A > 0$.
\end{assumptioniden}
\begin{assumptioniden}[\textit{Positivity for mediators}]\label{ass:posZ}
$\epsilon_Z < g_Z(z \mid W,A < 1-\epsilon_Z$  for some $\epsilon_Z > 0$ a.s. $\forall~z\in\mathcal{Z}$.
\end{assumptioniden}

Assumptions~\ref{ass:original_A1} and~\ref{ass:original_A2} state that there is
no unmeasured confounding of the $A$--$Y$ and $A$--$Z$ relationships,
respectively. These assumptions are typically guaranteed by design in RCTs (and
without conditioning on $W$), as the exposure mechanism is known and
randomization is controlled by the experimenter. However, in observational
studies, Assumptions~\ref{ass:original_A1} and~\ref{ass:original_A2} can prove
problematic, as they amount to strong assertions about unconfoundedness along
several pathways. Furthermore, even if the exposure mechanism were known, as in
an RCT, bias due to unmeasured confounding of the $Z$--$Y$ pathway remains.
Assumption~\ref{ass:original_A3} encodes a condition of no unmeasured
confounding of the effect of $Z$ on $Y$ (for example, by an exposure-induced,
intermediate confounder of the $Z$--$Y$ relationship), which is coupled with a
more stringent assumption of independence of counterfactual mediators
and outcomes under possibly conflicting values of the exposure. Taken together,
reliance on Assumptions~\ref{ass:original_A1},~\ref{ass:original_A2}
and~\ref{ass:original_A3} rules out existence of unmeasured variables that
confound the $A$--$Y$, $A$--$Z$, and $Z$--$Y$ relationships, which could be
untenable in practice, especially in observational studies.
Assumptions~\ref{ass:posA} and~\ref{ass:posZ} are standard in causal mediation
analysis, requiring that the conditional probability of $A$, given $W$, and the
conditional density of $Z$, given $(A, W)$, be well-defined, allowing for the
mediation formula to be tractable~\citep{pearl2001direct,
vanderweele2017mediation}. Furthermore, these assumptions are practically
required to ensure numerical stability of certain estimators.

We emphasize here that, under the stochastic intervention formulation of the
NDE, the cross-world assumptions in~\ref{ass:original_A2}
and~\ref{ass:original_A3} can be relaxed: only conditional independence of
variables based on the observed past (that is, sequential randomization) is
necessary to ensure identification. As such, Assumptions~\ref{ass:original_A1},
\ref{ass:original_A2a}, \ref{ass:original_A3a}, \ref{ass:posA}, and
\ref{ass:posZ} are needed for causal identification of this
parameter~\citep{vdl2008direct}, with Assumptions~\ref{ass:original_A2a}
and~\ref{ass:original_A3a} defined as%
\begin{assumptionidenalt}{ass:original_A2}[\textit{No exposure--mediator
  confounding}]\label{ass:original_A2a}
   $Z \indep A \mid W$.
\end{assumptionidenalt}
\begin{assumptionidenalt}{ass:original_A3}[\textit{No mediator--outcome
  confounding}]\label{ass:original_A3a}
  $Y(a,z) \indep Z \mid W~\forall~(a, z) \in \mathcal{A} \times \mathcal{Z}$.
\end{assumptionidenalt}

Next, we will weaken these assumptions to allow for unmeasured confounders of
the exposure--mediator relationship.

\subsection{Relaxed Identification Assumptions}

We now establish identification conditions for the NDE in the presence of
unmeasured confounding of the exposure--mediator relationship. In
Section~\ref{sec:ate-ident} of the Appendix, we extend this result to identify the ATE, under an additional assumption on the
distribution of $Z(a) \mid W$. We show that Assumptions~\ref{ass:original_A1}
and~\ref{ass:original_A2} can be modified in order to allow for unmeasured
confounders $V$ to have a direct effect on both the exposure and the mediators.
We emphasize that distinct components of $V$ can affect $A$ and $Z$. At
times, it will prove useful to consider a representation of the model with
assumptions stated in terms of the DAG in Figure~\ref{fig:DAG},
which depicts this setup in which the unmeasured set of confounders $V$ affects
both the exposure and mediators but not the outcome.
\begin{figure}[H]
    \centering
    \begin{tikzpicture}
    \node[state] (1) at (0,3.5) {$W$};
    \node[state] (2) at (0,2) {$Z$};
    \node[state,rectangle] (3) at (0,-0.5) {$V$};
    \node[state] (4) at (-2,1) {$A$};
    \node[state] (5) at (2,1) {$Y$};

    \path (1) edge (2);
    \path (1) edge (5);
    \path (1) edge (4);

    \path (4) edge[densely dotted] (5);
    \path (4) edge[dashed] (2);
    \path (2) edge[dashed] (5);

    \path (3) edge (4);
    \path (3) edge (2);

    \node[draw=red,loosely dotted,fit=(3) (5), inner sep=0.2cm] (machine) {};
\end{tikzpicture}
    \caption{DAG with unmeasured confounders $V$ affecting both exposure ($A$)
    and mediators ($Z$) but with no direct path between $V$ and the outcome
    ($Y$), reflecting Assumptions~\ref{ass:no-V-to-Y} and \ref{ass:equalE}. The
    direct effect of $A$ on $Y$ is denoted via a dotted line, and
    the indirect effect through $Z$ with a dashed line. The lack of direct path
    between $V$ and $Y$ is emphasized through the red dotted rectangle. We
    depict the observed variables with circles and unobserved variables with
    rectangles, omitting $U$ from the graph for the sake of visual economy.}
    \label{fig:DAG}
\end{figure}
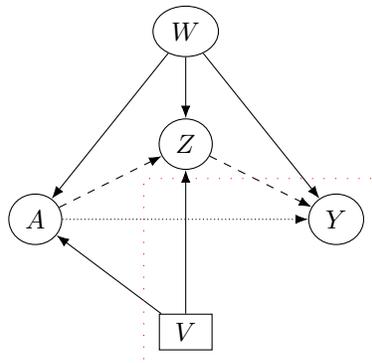

\begin{assumptioniden}[\textit{No unmeasured endogenous pathway}]
\begin{equation*}
\label{ass:no-V-to-Y}
    f_Y(W, V, A, Z, U_Y) \equiv f_Y(W, A, Z, U_Y) \ .
\end{equation*}
\end{assumptioniden}
\vspace{1mm}
\begin{assumptioniden}[\textit{Conditional expectation equivalence}]
\label{ass:equalE}
\begin{equation*}
    \E[Y \mid W,V,A=1,Z] = \E[Y \mid W,A=1,Z] \ .
\end{equation*}
\end{assumptioniden}

In line with the DAG of Figure~\ref{fig:DAG}, Assumption~\ref{ass:no-V-to-Y}
requires $f_Y$ to be a function of $V$ only through the effect $V$ has on
$(A,Z)$. Assumption~\ref{ass:equalE} implies the equivalence of the conditional
expectation of $Y$ with and without conditioning on the unmeasured confounders
$V$. While we use Assumption~\ref{ass:equalE} in the consequent identification
result for the NDE, we elaborate on the additional restrictions on the
distribution of $U$ implied by Assumption~\ref{ass:equalE} in
Section~\ref{sec:equivE} of the Appendix. In
brief, Assumption~\ref{ass:equalE} could imply restrictions on the joint
distribution of $U$, though significantly weaker than the typical assumption of
independence of exogenous errors~\citep{pearl2000causality}. Intuitively, the
key assumption necessary for identification of the NDE is the absence of
a direct pathway between the unmeasured confounders and the outcome.

In CoVPN 3008, Assumptions~\ref{ass:no-V-to-Y} and~\ref{ass:equalE} are
plausible when quantifying the causal effect of hybrid versus vaccine immunity
$A$ through pathways involving a candidate immune correlate of protection $Z$
(the indirect effect) and those avoiding it (the direct effect) on a COVID-19
disease outcome $Y$, wherein the pathway between hybrid/vaccine immunity and the
candidate immune correlate may be confounded by factors $V$ that may include
pre-vaccination (baseline) levels of the immune marker or prior infection
history. In this setting, such confounding factors would be expected to only
impact the post-vaccination COVID-19 disease outcome through their influence on
hybrid/vaccine immunity and the post-vaccination immune response measured as
a candidate immune correlate of protection. Thus, use of
Assumptions~\ref{ass:no-V-to-Y} and~\ref{ass:equalE} to establish
identification of the NDE can facilitate interrogation of candidate immune
correlates of protection in CoVPN 3008, despite the likely presence of such
unmeasured confounders.


\subsection{Identifying Functional of the Natural Direct
Effect}\label{subsec:id-NDE}

\begin{theorem}\label{thm:id-NDE}
Consider the following complete data (causal) estimand,
\begin{equation*}
  \Psi^F_\text{NDE}(\Pt_{(U,X),0}) = \int_{\mathcal{W},\mathcal{Z}} \E[Y(1,z) - Y(0,z)
    \mid W=w] g_Z(z \mid w, A=0) p_W(w)~dz~dw \ .
\end{equation*}
Under Assumptions~\ref{ass:posA}, \ref{ass:posZ},
\ref{ass:no-V-to-Y} and~\ref{ass:equalE}, the corresponding observed data
estimand may be expressed as
\begin{align*}
\psi_\text{NDE} = \Psi_\text{NDE}(\Pt_0) &= \int_{\mathcal{W}} \int_{\mathcal{Z}}
    \left(\overline{Q}_Y(w,A=1,z) - \overline{Q}_Y(w,A=0,z)\right)
    g_Z(z \mid w, A=0)~dz~p_W(w)~dw\\
  &= \E_{\Pt_0}\left[\E_{\Pt_0}\left[\E_{\Pt_0}[Y \mid W,A=1,Z] -
    \E_{\Pt_0}[Y \mid W,A=0,Z] \mid W,A=0\right]\right] \ .
\end{align*}
\end{theorem}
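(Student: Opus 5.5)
The plan is to identify the inner counterfactual regression $\E[Y(a,z)\mid W=w]$ with $\overline{Q}_Y(w,a,z)$ for each $a\in\{0,1\}$ and each $z$. The integral against $g_Z(z\mid w,0)\,p_W(w)$ is already a functional of the observed data, so once this is done the first display follows by linearity. The second display is just the iterated-expectation form of the first: integrating over $z$ against $g_Z(\cdot\mid W,A=0)$ is $\E[\,\cdot\mid W,A=0]$, and integrating against $p_W$ is the outer expectation. Assumptions~\ref{ass:posA} and~\ref{ass:posZ} ensure $\overline{Q}_Y(w,a,z)$ is well defined on the support of $g_Z(\cdot\mid w,0)\,p_W$. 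This matters because we evaluate $\overline{Q}_Y(w,1,z)$ at mediator values drawn from the $A=0$ conditional law.

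\textbf{Step 1 (structural form).} Using Assumption~\ref{ass:no-V-to-Y}, write the counterfactual as $Y(a,z)=f_Y(W,a,z,U_Y)$. In particular, $Y(a,z)$ depends on the exogenous noise only through $U_Y$ and not on $V$. By consistency, on the event $\{A=a,Z=z\}$ we have $Y=f_Y(W,a,z,U_Y)=Y(a,z)$. Hence
$$\overline{Q}_Y(w,a,z)=\E[f_Y(w,a,z,U_Y)\mid W=w,A=a,Z=z],$$
and likewise
$$\E[Y\mid W,V,A=a,Z=z]=\E[f_Y(w,a,z,U_Y)\mid W=w,V,A=a,Z=z].$$

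\textbf{Step 2 (removing the conditioning on $(V,A,Z)$).} The key claim is
$$\E[f_Y(w,a,z,U_Y)\mid W=w,V,A=a,Z=z]=\E[f_Y(w,a,z,U_Y)\mid W=w].$$
Combined with Step~1, this gives $\E[Y(a,z)\mid W=w]=\overline{Q}_Y(w,a,z)$.

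For $a=1$, I would use Assumption~\ref{ass:equalE}, read as in Section~\ref{sec:equivE} of the Appendix. There it is shown to amount to a mean-independence restriction on $U$: $U_Y$ carries no information about $f_Y(w,a,z,\cdot)$ beyond $W$ once we also condition on $(V,A,Z)$. The left-hand side then does not vary with $V$. Averaging over $V$ given $(W,A,Z)$, and then over $(A,Z)$ given $W$, yields the claim.

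For $a=0$, I would argue the same way. Conditioning on $V$ together with $(W,A)$ blocks every back-door path from $(A,Z)$ to $Y$ in Figure~\ref{fig:DAG}, since $V$ has no arrow into $Y$. So the restriction on $U_Y$ underlying Assumption~\ref{ass:equalE} applies equally in the arm $A=0$.

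\textbf{Step 3 (assembly).} Substitute the result of Step~2 into $\Psi^F_{\text{NDE}}$ and integrate over $z$ and then $w$ to obtain $\Psi_{\text{NDE}}(\Pt_0)$.

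\textbf{Main obstacle.} I expect Step~2 to be the main difficulty, in particular the control arm. Assumption~\ref{ass:equalE} is stated only for $A=1$ and only as an equality of conditional means, not as an independence statement about $U_Y$. What I would try first is to make the Appendix restriction explicit: require $\E[f_Y(w,a,z,U_Y)\mid W=w,V,A,Z]$ to be constant in $(V,A,Z)$, which is the mean analogue of $U_Y\indep(V,A,Z)\mid W$.

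If only the $A=1$ version is available, I would fall back on identifying the $a=0$ term directly. One has $\int \E[Y(0,z)\mid w]\,g_Z(z\mid w,0)\,dz=\E[Y(0)\mid W=w]$ only for the natural mediator law, which is not available here. So instead I would check whether the $a=0$ term can be written as $\E[\E[Y\mid W,A=0,Z]\mid W,A=0]=\E[Y\mid W,A=0]$ by consistency within the observed arm. That rewriting requires only that the stochastic-intervention counterfactual in the control arm coincide with the observed $A=0$ outcome distribution.
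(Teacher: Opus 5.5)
Your main line is the paper's own argument. The paper proves Theorem~\ref{thm:id-NDE} by applying Lemma~\ref{lemma:equivE} in each arm and then integrating against $g_Z(z\mid w,0)\,p_W(w)$. That lemma's proof is exactly your Steps~1--2: write $Y=f_Y(w,a,z,U_Y)$ via Assumption~\ref{ass:no-V-to-Y}, drop the conditioning event $\{f_A(w,V,U_A)=a,\ f_Z(w,V,a,U_Z)=z\}$, then drop $V$. The chain there is written for $a=1$ but runs verbatim at $a=0$.

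Your ``main obstacle'' diagnosis is right, but your justification of Step~2 is not. Section~\ref{sec:equivE} does not show that Assumption~\ref{ass:equalE} \emph{amounts to} a mean-independence restriction on $U_Y$. It only shows that the extra conditions \ref{ass:restrict_UA}--\ref{ass:restrict_UY} of that section are \emph{sufficient} for it. The paper's proof silently relies on those conditions, although they are not among the theorem's hypotheses. Under the literal Assumption~\ref{ass:equalE}, your step ``average over $(A,Z)$ given $W$'' is unjustified. That assumption only makes the diagonal quantity $\E[f_Y(w,1,z,U_Y)\mid W=w,V,A=1,Z=z]$ free of $V$. It says nothing about $\E[f_Y(w,1,z,U_Y)\mid W=w,A=a',Z=z']$ for $(a',z')\neq(1,z)$.

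So promote your strengthened condition to an explicit hypothesis: $\E[f_Y(w,a,z,U_Y)\mid W=w,V,A,Z]$ is constant in $(V,A,Z)$ for \emph{both} $a\in\{0,1\}$. With it, Steps~1--3 are a complete proof, and Assumption~\ref{ass:equalE} becomes a consequence rather than an input. This is, if anything, cleaner than the paper's version. The paper's step that conditions on $\{f_A=1,\ f_Z=z\}$ needs the joint statement $U_Y\indep(U_A,U_Z)\mid(W,V)$. Conditions~\ref{ass:restrict_UA} and~\ref{ass:restrict_UZ} only assert pairwise independence.

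Drop the fallback for the control arm. The identity $\int\E[Y(0,z)\mid W]\,g_Z(z\mid W,0)\,dz=\E[Y\mid W,A=0]$ does not follow from consistency. It is a control-arm exchangeability statement, i.e., precisely what has to be established. Here is a counterexample:
\begin{itemize}
\item $W$ is trivial and $V$ is independent of everything.
\item $A=U_A\sim\mathrm{Bern}(1/2)$.
\item $Z\mid A\sim\mathrm{Bern}(0.3+0.4A)$, generated through an independent $U_Z$.
\item $U_Y=U_A+\xi$, with $\xi$ independent and mean zero.
\item $f_Y(w,a,z,u)=a\,z+(1-a)\,u$.
\end{itemize}
Then Assumptions~\ref{ass:posA}, \ref{ass:posZ}, \ref{ass:no-V-to-Y} and~\ref{ass:equalE} all hold; both sides of the last one equal $Z$. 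Your $a=1$ condition also holds trivially. Yet $\E[Y(0,z)\mid W]=1/2$ while $\overline{Q}_Y(W,0,z)=0$, so the identification formula is off by $1/2$.

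The same example shows that the four hypotheses listed in the theorem do not by themselves imply its conclusion. A restriction on $U_Y$ covering the $a=0$ arm has to be assumed, not derived. Either your mean-independence condition or the Appendix conditions applied at $a=0$ will do.
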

A proof is provided in Section~\ref{sec:nde-ident} of the Appendix. We again emphasize that this identification
result is immediately applicable to the RR NDE of
Equation~\eqref{eqn:rr_nat_decomp}.

\begin{corollary}
Consider the following complete data (causal) estimand:
\begin{equation*}
    \Psi_\text{RR-NDE}^F(\Pt_{(U,X),0}) = \frac{\int_{\mathcal{W},\mathcal{Z}} \E[Y(1,z) 
    \mid W=w] g_Z(z \mid w, A=0) p_W(w)~dz~dw}{\int_{\mathcal{W},\mathcal{Z}} \E[Y(0,z)\mid W=w] g_Z(z \mid w, A=0) p_W(w)~dz~dw} \;.
\end{equation*}
Under the conditions outlined in Theorem~\ref{thm:id-NDE}, the corresponding observed data estimand may be expressed as
\begin{equation*}
    \psi_\text{RR-NDE} = \Psi_\text{RR-NDE}(\Pt_0) = \frac{\E_{\Pt_0}\left[\E_{\Pt_0}\left[\E_{\Pt_0}[Y \mid W,A=1,Z]\mid W,A=0\right]\right]}{\E_{\Pt_0}\left[\E_{\Pt_0}\left[
    \E_{\Pt_0}[Y \mid W,A=0,Z] \mid W,A=0\right]\right]} \;.
\end{equation*}
\end{corollary}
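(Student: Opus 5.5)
The plan is to reduce the corollary to the term-by-term content of Theorem~\ref{thm:id-NDE}. The complete data NDE in Theorem~\ref{thm:id-NDE} is a difference of two integrals,
\begin{equation*}
  \Psi^F_{1} = \int_{\mathcal{W},\mathcal{Z}} \E[Y(1,z) \mid W=w]\, g_Z(z \mid w, A=0)\, p_W(w)~dz~dw
\end{equation*}
and $\Psi^F_{0}$, defined analogously with $Y(0,z)$ in place of $Y(1,z)$. The causal RR NDE is exactly the ratio $\Psi^F_1/\Psi^F_0$. It therefore suffices to show two things under Assumptions~\ref{ass:posA}, \ref{ass:posZ}, \ref{ass:no-V-to-Y} and~\ref{ass:equalE}: first, that $\Psi^F_a$ equals $\E_{\Pt_0}[\E_{\Pt_0}[\overline{Q}_Y(W,a,Z) \mid W, A=0]]$ for each $a \in \{0,1\}$ separately; second, that the ratio is well defined.

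\textbf{Step 1 (termwise identification).} I would revisit the argument behind Theorem~\ref{thm:id-NDE} and check that it identifies the inner quantity $\E[Y(a,z) \mid W=w]$ for each fixed $a$, rather than only the contrast. The chain I expect is as follows. By Assumption~\ref{ass:no-V-to-Y}, $Y(a,z) = f_Y(W,a,z,U_Y)$ does not depend on $V$. Conditioning additionally on $V$ and on $A=a$, $Z=z$ then lets $Y(a,z)$ be replaced by the observed $Y$. Finally, the dependence on $V$ is removed through Assumption~\ref{ass:equalE}, which gives $\E[Y \mid W,V,A=a,Z=z] = \overline{Q}_Y(W,a,z)$. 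Positivity, via Assumptions~\ref{ass:posA} and~\ref{ass:posZ}, guarantees that $\overline{Q}_Y(w,a,z)$ is well defined for every $z$ charged by $g_Z(\cdot \mid w,0)$.

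Substituting this into $\Psi^F_a$ gives $\int\int \overline{Q}_Y(w,a,z)\, g_Z(z\mid w,0)\, p_W(w)\,dz\,dw$. Rewriting the integrals over $g_Z(\cdot \mid w,0)$ and $p_W$ as iterated conditional expectations yields the nested form. For $a=1$ this is the numerator of $\psi_\text{RR-NDE}$, and for $a=0$ it reduces to $\E_{\Pt_0}[\E_{\Pt_0}[Y\mid W,A=0]]$, which is the denominator.

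\textbf{Step 2 (ratio).} The denominator must be nonzero for the ratio to be defined. Since $Y$ is binary, this amounts to requiring $\P(Y(0,Z(0))=1)>0$, which I would state as an implicit condition under which the RR NDE of Equation~\eqref{eqn:rr_nat_decomp} is meaningful at all. With that in place, the identified numerator and denominator combine to give $\psi_\text{RR-NDE}$, and the corollary follows.

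The main obstacle is Step 1 for $a=0$. Assumption~\ref{ass:equalE} is stated only at $A=1$, so I must confirm that the proof of Theorem~\ref{thm:id-NDE} handles the $Y(0,z)$ term on its own and not merely through cancellation in the difference. If the $a=0$ term is not identified separately there, I would identify it directly as follows. Because the mediator draw uses $g_Z(\cdot\mid w,0)$, the $a=0$ integral should collapse to $\E[\E[Y \mid W, A=0]]$, and I would justify this using the structural restriction of Assumption~\ref{ass:no-V-to-Y} together with the same $V$-marginalization argument applied on the $A=0$ stratum.
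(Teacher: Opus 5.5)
Your structure is right: identify each arm $\int\int \E[Y(a,z)\mid W=w]\,g_Z(z\mid w,0)\,p_W(w)\,dz\,dw$ separately, then take the ratio. This is how the corollary follows from the paper's appendix proof of Theorem~\ref{thm:id-NDE}, which identifies $\E[Y(a,z)\mid W]$ arm by arm via Lemmas~\ref{lemma:equivE} and~\ref{lemma:cde-ident}.

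The gap is in Step~1: neither move in your chain follows from Assumptions~\ref{ass:posA}, \ref{ass:posZ}, \ref{ass:no-V-to-Y} and~\ref{ass:equalE} alone.
\begin{itemize}
\item Replacing $Y(a,z)$ by $Y$ after conditioning on $(W,V,A=a,Z=z)$ needs $Y(a,z)\indep(A,Z)\mid(W,V)$. This restricts the joint law of $(U_A,U_Z,U_Y)$, which Assumption~\ref{ass:no-V-to-Y} does not do; it only deletes $V$ from the arguments of $f_Y$. Conditioning on $V$ also needs positivity given $(W,V)$, not just given $W$.
\item More seriously, your fallback for $a=0$ fails. Because $f_Y$ does not take $V$ as an argument does not stop $U_Y$ from carrying information about $V$. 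So $\E[Y\mid W,V,A=0,Z]$ can depend on $V$ while Assumption~\ref{ass:equalE}, stated only at $A=1$, still holds.
\end{itemize}

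Here is a counterexample to the fallback. Take $W$ degenerate and $V=U_V\sim\text{Bern}(1/2)$. Let $A=\I(U_A<\tfrac14+\tfrac12V)$ with $U_A\sim\text{Unif}(0,1)$, and $Z=U_Z\sim\text{Bern}(1/2)$. Let $U_Y=(U_V,U_2)$ with $U_2\sim\text{Bern}(1/2)$, the variables $U_V,U_A,U_Z,U_2$ mutually independent, and $Y=AU_2+(1-A)U_V$. Then $g_A\equiv 1/2$ and $g_Z\equiv 1/2$, $f_Y$ does not involve $V$, and $\E[Y\mid V,A=1,Z]=\E[Y\mid A=1,Z]=1/2$. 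So all four cited assumptions hold. Yet $\E[Y(1,z)]=\E[Y(0,z)]=1/2$ gives $\Psi^F_{\text{RR-NDE}}=1$, while $\E[Y\mid A=0,Z]=\P(V=1\mid A=0)=1/4$ gives $\psi_{\text{RR-NDE}}=2$.

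The missing ingredient is what the paper's proof actually relies on, even though the theorem's statement lists only four assumptions. That is Lemma~\ref{lemma:equivE}, under Assumptions~\ref{ass:restrict_UA}, \ref{ass:restrict_UZ} and~\ref{ass:restrict_UY}: $U_A\indep U_Y\mid(W,V)$, $U_Z\indep U_Y\mid(W,V)$, and $U_Y\indep V\mid W$. It gives $\E[Y\mid W,V,A=a,Z]=\E[Y\mid W,A=a,Z]$ for both $a\in\{0,1\}$, and it also supplies the exchangeability used in your middle step. In the example above, it is $U_Y\indep V\mid W$ that fails. With these hypotheses stated explicitly, your Step~1 works for both arms with no separate $a=0$ argument, and the ratio follows.

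A minor point on Step~2: state the nondegeneracy condition as $\E_{\Pt_0}[\E_{\Pt_0}[Y\mid W,A=0]]>0$. Equating the causal denominator with $\P(Y(0,Z(0))=1)$ would additionally require $Z(0)\indep A\mid W$. That is exactly what an unmeasured $V$ affecting both $A$ and $Z$ can break.
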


Note that the discussion in Section~\ref{causal_sem} focused on the NDE as the
parameter of interest, while Equation~\eqref{eqn:nde_stoch} introduced the
alternative characterization of the NDE as a contrast between the counterfactual
mean outcomes when the exposure is deterministically intervened upon (contrast
setting $A = 1$ and setting $A = 0$) and the mediators are drawn from the CDF of
$g_Z(z \mid W, 0)$. The complete data parameter introduced in
Theorem~\ref{thm:id-NDE} corresponds to an average of controlled direct effects
(CDEs), where the average is over possible values of the mediators $Z$; we
discuss identification of the CDE in Section~\ref{sec:cde-ident} of the Appendix. The principal difference between the NDE and
the average of CDEs can be attributed to the intervention drawing counterfactual
values of the mediators using $g_Z(z \mid W,0)$---for the NDE, this
post-intervention conditional density corresponds to the counterfactual density
of $Z(0) \mid W$, as opposed to $g_{Z}(z \mid W,0)$. While this may initially
appear unsatisfactory, we emphasize that, under the consistency assumption for
the NDE, $\P_{\Pt_{(U,X),0}}(Z(0) < z \mid W) = \P_{\Pt_0}(Z < z \mid W, A =
0)$. With this in mind, Theorem~\ref{thm:id-NDE} in fact constitutes an
identification result for the NDE, one which relaxes the required assumptions
by allowing for unmeasured confounding of the exposure--mediator relationship.
This relaxation of the standard assumptions on no unmeasured confounding only
requires that the pathway by which any unmeasured confounders $V$ affect $Y$ be
blocked by $Z$. An interesting---and, indeed, reassuring---property of this
result is that, when there are no unmeasured confounders (that is, $V \equiv
\emptyset$), the original identification results
of~\citet{robins1992identifiability} and~\citet{pearl2001direct} are recovered.

\section{Inference}\label{sec:inference}

Having presented our result for identification of the (RR) NDE under unmeasured
exposure--mediator confounding, we now make recommendations for performing
inference about $\Psi_\text{NDE}$. We emphasize that inference about
$\psi_\text{RR-NDE}$ follows immediately from a straightforward application of
the functional delta method \citep{vdl2011targeted}. Estimation approaches
appearing in the non-exhaustive summary that follows are compatible with our
relaxed identification strategy.

Myriad inferential approaches have been developed for the NDE. The parametric
G-computation method, which yields a substitution estimator relying on maximum
likelihood estimators of the marginal covariate distribution, the conditional
mediator density, and the conditional outcome regression, can be unbiased when
all nuisance parameters are estimated
consistently~\citep{petersen2006estimation, imai2010general}. This estimator is
biased, however, if any of the nuisance parameter estimators are not consistent
for their respective target functionals and, further, is incompatible with the
use of nonparametric regression strategies for nuisance parameter estimation. The
estimating equation approach of~\citet{vdl2008direct}, building upon the
general framework of~\citet{vdl2003unified}, is robust to misspecification of
the conditional outcome regression or exposure assignment mechanism but
requires that the user-supplied conditional mediator density corresponds to the
(true) conditional mediator density under $\Pt_0$. A targeted maximum
likelihood (TML) estimation~\citep{vdl2006targeted, vdl2011targeted} approach
is presented in~\citet{vdl2008direct}, but this suffers from the same
limitation. \citet{zheng2012} proposed a TML estimator exhibiting a form of
multiple robustness based on the efficient influence function (EIF) of the NDE
derived by~\citet{tchetgen2012semiparametric}, who proposed an efficient
estimator based on one-step bias-correction~\citep{pfanzagl1985contributions,
bickel1993efficient}. The one-step and TML estimators are both asymptotically
unbiased if any of the following conditions hold: (i) the conditional expected
outcome given confounders, exposure, and mediators, and the
confounder-conditional expected difference of the conditional expected outcomes
given confounders, mediators, and different contrasts of the exposure are
consistently estimated; (ii) the exposure mechanism and the conditional outcome
given confounders, exposure, and mediators are consistently estimated; (iii)
the exposure mechanism and the mediator density conditioning on the exposure
and confounders are consistently estimated; or (iv) the exposure mechanism and
the conditional distribution of the exposure given confounders and of the
mediators given exposures and confounders are consistently estimated.
Furthermore, if conditions (i) and (ii), and either of (iii) or (iv) hold, then
these estimators will be asymptotically efficient, achieving the
semi-parametric efficiency bound that applies to all regular asymptotically
linear estimators in the nonparametric model $\M$. Despite their asymptotic
equivalence, only the TML estimator is guaranteed to produce estimates that
respect global model constraints, by virtue of it being a plug-in estimator.

For completeness, we briefly present the one-step and TML
estimators of the NDE, as well as their asymptotic properties
\citep[see][for more details]{tchetgen2012semiparametric, zheng2012}. Note that the
EIF characterizes the scaled difference of the
estimator of the target parameter evaluated at $\hat{\Pt}_n$---a plug-in
estimator of $\Pt_{0}$ made up of elements of $\Pt_{n}$, and possibly augmented
by nuisance parameter estimators---and centered at $\Psi_\text{NDE}(\Pt_0)$
through a von Mises expansion~\citep{mises1947asymptotic, bickel1993efficient,
vdvaart2000asymptotic, vdl2003unified, hines2022, kennedy2022semiparametric}:
\begin{equation}\label{eq:von-mises}
  \begin{split}
    \sqrt{n}\left(\Psi_\text{NDE}(\hat{\Pt}_n)-\Psi_\text{NDE}(\Pt_0)\right)
    & = \frac{1}{\sqrt{n}}\sum_{i=1}^n D^\star(O_i, \Pt_0) -
      \frac{1}{\sqrt{n}}\sum_{i=1}^n D^\star(O_i, \hat{\Pt}_n) \\
    & + \sqrt{n}\left(\E_{\Pt_n} - \E_{\Pt_0}\right)
      \left(D^\star(O,\hat{\Pt}_{n}) - D^\star(O, \Pt_0)\right) -
      \sqrt{n} R(\Pt_0, \hat{\Pt}_n) \ ,
\end{split}
\end{equation}
where $\sqrt{n} R(\Pt_0, \hat{\Pt}_n)$ is an exact second-order remainder term. By the
central limit theorem and moment properties of the EIF, the first term converges
to a Gaussian distribution with mean zero and variance equal to
$\E_{\Pt_0}[D^\star(O, \Pt_0)^2]$. The second term of
Equation~\eqref{eq:von-mises} is a plug-in bias term that, in general, does not
vanish asymptotically. The one-step estimator is obtained by adding this term to
the plug-in estimator. For the TML estimator,~\citet{zheng2012} rely on the TML
estimation framework~\citep{vdl2006targeted, vdl2011targeted} to modify initial
nuisance parameter estimates such that this term is negligible when using these
updated nuisance estimators within the plug-in estimator. That is, in TML
estimation, the initial nuisance parameter estimators are tilted to also be
a solution to the EIF estimating equation;
for conciseness, we omit the details here but include them in
Section~\ref{sec:tmle} of the Appendix. The third
and fourth terms converge to zero in probability under standard empirical
process conditions and if any of the previously listed consistency requirements
on nuisance parameter estimators are satisfied~\citep{zheng2012}. Note that the
empirical process conditions can be loosened by applying sample-splitting
(cross-fitting) in nuisance parameter
estimation~\citep{schick1986asymptotically, klaassen1987consistent,
zheng2011cross, chernozhukov2018double}.

Denoting the TML estimator of~\citet{zheng2012} by
$\Psi_\text{NDE}(\Pt_n^\star)$, where $\Pt_n^\star$ is a plug-in estimator of
$\Pt_0$ made up of $\Pt_n$ and TML-adjusted nuisance parameter estimators, and
noting that the EIF of the target parameter $\Psi_\text{NDE}(\Pt)$ defined in
Theorem~\ref{thm:id-NDE} is
\begin{equation}\label{eq:eif}
  \begin{split}
    D^\star(O, \Pt)
    & = \left(\frac{\I(A=1)}{g_A(W)}
      \frac{g_Z(Z \mid W,0)}{g_Z(Z \mid W, 1)}
      - \frac{\I(A=0)}{1 - g_A(W)}\right)
      \left(Y - \overline{Q}_Y(W, A, Z)\right) \\
    & \,\,\,\, + \frac{\I(A=0)}{1 - g_A(W)}
    \left(\overline{Q}_Y(W, 1, Z) - \overline{Q}_Y(W, 0, Z) -
      \E_{\Pt}\left[\overline{Q}_Y(W, 1, Z) -
      \overline{Q}_Y(W, 0, Z)\big|W, A=0\right]
    \right)\\
    & \,\,\,\, + \E_{\Pt}\left[\overline{Q}_Y(W, 1, Z) -
      \overline{Q}_Y(W, 0, Z)\big|W, A=0\right] - \Psi_\text{NDE}(\Pt) \ ,
  \end{split}
\end{equation}
we find, under the aforementioned sufficient conditions, that
\begin{equation*}
  \sqrt{n}\left(\Psi_\text{NDE}(\Pt_n^\star) - \Psi_\text{NDE}(\Pt_0)\right) =
  \frac{1}{\sqrt{n}}\sum_{i=1}^{n} D^{\star}(O_i, \Pt_0) + o_{\Pt}(1) \ .
\end{equation*}
This implies that
\begin{equation*}
  \Psi_\text{NDE}(\Pt_n^\star) \overset{D}{\rightarrow}
  N(\Psi_\text{NDE}(\Pt_0), \E_{\Pt}[D^\star(O, \Pt_0)^2]/n) \ .
\end{equation*}

That is, assuming that the identification result outlined in
Section~\ref{sec:identification} holds, $\Psi_\text{NDE}(\Pt_n^\star)$ is a
consistent---and potentially asymptotically efficient---estimator of the NDE.
When its asymptotic linearity conditions are satisfied, Wald-type confidence
intervals (CIs) can be constructed from the EIF-based variance estimator.
Similarly, formal hypothesis testing may be performed. These results hold for
the one-step estimator of $\Psi_\text{NDE}(\Pt_0)$ as well.

In our motivating setting, evaluating immune correlates of protection in the
CoVPN 3008 study, candidate immune correlates are measured via a case-cohort
sampling design, a form of \textit{outcome-dependent}, two-phase
sampling~\citep{breslow1997maximum, breslow2000semi, breslow2003large} commonly
used to measure immune responses in vaccine efficacy
trials~\citep{follmann2006augmented}. Two-phase sampling necessitates
adjustments to the NDE estimator and EIF alike~\citep{robins1994estimation,
rose2011targeted2sd, hejazi2020efficient}. We provide details on this in
Section~\ref{sec:two-phase} of the Appendix.

Open-source implementations of the cross-fitted NDE and RR NDE estimators are
available in the \texttt{medoutcon} software
package~\citep{hejazi2022medoutcon-rpkg, hejazi2022medoutcon-joss} for the
\texttt{R} language and environment for statistical computing~\citep{R}. The
Super Learner method~\citep{vdl2007super} implemented in the \texttt{sl3}
\texttt{R} package~\citep{coyle2020sl3} is used for nuisance parameter
estimation. This permits the use of flexible, data-adaptive regression and
machine learning procedures throughout the estimation process, reducing
opportunities for model misspecification. The \texttt{medoutcon} \texttt{R}
package is used in the simulation study presented in Section~\ref{sec:sim}, the
simulated comparative vaccine study in Section~\ref{sec:applic-prospect}, and
the application to CoVPN 3008 in Section~\ref{sec:applic-covpn3008}.

\section{Simulation Study: Impact of Unmeasured Exposure--Mediator
Confounding}\label{sec:sim}

Practical implications of the identification result presented in
Theorem~\ref{thm:id-NDE} for inference about the NDE are explored
with simulated observational study data generated according to following
data-generating process:
\begin{align*}
  W_1 & \sim \text{Unif}(-1, 1);
  W_2 \sim N(0, 1);
  V \sim N(0, 1) \\
  A \mid W, V & \sim \text{Bern}
    \left(\left(1 + \exp\{-W_1-W_2-V\}\right)^{-1}\right) \\
  Z \mid A, W, V & \sim \text{Bern}
    \left(\left(1 + \exp\{-W_1-W_2- \gamma V - 3A\}\right)^{-1}\right) \\
  Y \mid A, W, Z & \sim N(3A + W_1 + W_2 + Z, 1) \;.
\end{align*}

Here, $W = (W_1, W_2)$ are baseline confounding variables, $V$ is an
unobserved exposure--mediator confounder, $A$ is an indicator of assignment to
the exposure condition, $Z$ is a mediator, and $Y$ is the outcome. We highlight
that this data-generating process satisfies Assumptions~\ref{ass:no-V-to-Y}
and~\ref{ass:equalE}: the effect of $V$ on $Y$ operates solely through $Z$.

We consider four effect size values for the effect of the unmeasured confounder
$V$ on the mediator $Z$, $\gamma$: $0, 1, 2$ and $3$. For each value of $\gamma$,
the influence of unmeasured confounding on the mediator is assessed via the relative
risk of $Z \mid V$ obtained by fixing $V$ at its fourth and first quartile. These
relative risks are denoted by $\text{RR}_Z$, and are computed via Monte Carlo simulation.
The $\text{RR}_Z$ for $\gamma$ equal to $0, 1, 2$, and $3$ are $1.335$, $2.166$,
$3.379$, and $5.215$, respectively. These values
were computed through the numerical integration of these parameters' closed-form
expressions~\citep{cubature}. This implies that the probability that
$Z=1$, conditional on $V$ and $A$, increases as $\gamma$ increases.

\begin{figure}[ht!]
  \centering
  \includegraphics[width=0.95\textwidth]{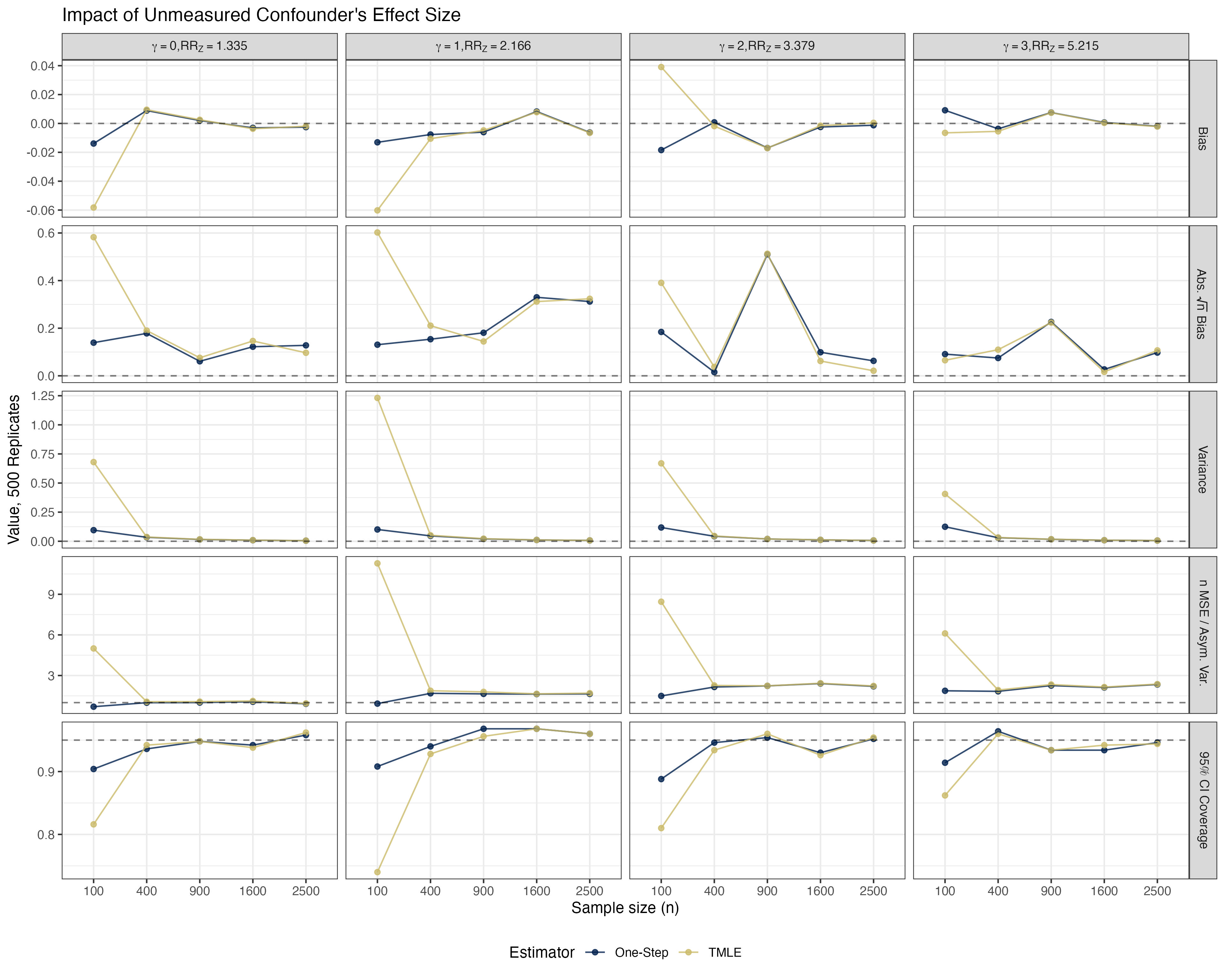}
  \caption{\emph{Simulation study: Impact of unmeasured exposure--mediator confounding.} The dashed horizontal lines represent the desired asymptotic values
    of each metric. These values are zero for bias and variance, one for the
    normalized MSE, and 0.95 for coverage of 95\% Wald-type CIs. The dashed line
    for the normalized MSE corresponds to the asymptotic efficiency bound when
    there is no unmeasured confounding $(\gamma=0)$.}
  \label{fig:sim-results}
\end{figure}

Varying across samples of size $n=$ $100$, $400$, $900$, $1,\!600$, and
$2,\!500$ (which covers the sample size early-phase vaccine efficacy studies),
we assessed the finite-sample impact of $\gamma$'s magnitude on inference for
the NDE using the one-step and TML estimators described in
Section~\ref{sec:inference}. All nuisance parameter estimators were estimated
using GLMs, except for $\E_{P}[\overline{Q}_Y(W,1,Z) -
\overline{Q}_Y(W,0,Z)\big|W,A=0]$, which was estimated using the highly adaptive
lasso (HAL) of~\citet{vdl2017generally} via the \texttt{hal9001} \texttt{R}
package~\citep{coyle2020hal9001, hejazi2020hal9001-joss}. Five hundred random
samples were generated for each sample size and value of $\gamma$ from the
data-generating process above. The empirical bias, absolute scaled empirical
bias, variance, normalized mean squared error (MSE), and coverage of 95\%
Wald-type CIs were then computed for each scenario
(Figure~\ref{fig:sim-results}). The \texttt{simChef} \texttt{R} package provided
the architecture for this simulation study~\citep{simChef}.

We find that the magnitude of $\gamma$ has no appreciable impact on empirical
bias metrics, variance, or test coverage. In all sample sizes larger than
$n = 100$, the estimators exhibit minimal bias and the accompanying inferential
procedures produce CIs that realize the nominal coverage rate of 95\%. While the
one-step and TML estimators behave similarly, the one-step estimator's
performance appears superior in smaller sample sizes---it is approximately
unbiased even when $n=100$, its empirical variance is no larger than $0.125$,
and its empirical coverage is above $\approx90$\% in all but one
scenario---which is unexpected given that it does not obey the substitution
principle. The asymptotic efficiency of the estimators, however, decreases as
$\gamma$ increases regardless of sample size.

That the estimators fail to achieve the same efficiency bound set by $\gamma=0$
when $\gamma=1,2$ or $3$ is unsurprising. The identification result of
Theorem~\ref{thm:id-NDE} guarantees that a familiar statistical estimand would
correspond to the NDE under a novel set of relaxed assumptions, which allow
a form of unmeasured confounding by $V$, \textit{not} that the corresponding
estimator would be asymptotically efficient under such unmeasured confounding.
Indeed, failing to account for relevant sources of variation---an unmeasured
confounder $V$---can plausibly lead to estimators with inflated variance, which
is to say \textit{there ain't no such thing as a free lunch}. While unmeasured
confounding usually compromises inference by leading to non-identification of
target parameters, here, we have succeeded, through carefully formulated
relaxed identification conditions, in achieving identification and limiting
its subsequent impact on reduced efficiency.

\section{Simulation Study: Application in a Vaccine Prospective
Cohort Study}\label{sec:applic-prospect}

Our inferential approach was developed in anticipation of its application to
data from the CoVPN 3008 study, with most developments occurring concurrently
with the study's execution. We engineered a data-generating process expected to
capture key characteristics of the data collected in CoVPN 3008 to assess the
operating characteristics of the cross-fitted RR NDE estimators adjusted for two-phase
sampling. Trial participants, defined as $O = (W, A, R, RZ, Y)$, were generated
\begin{align*}
  V & \sim \text{round}(\text{Unif}(0.1, 0.8), 1) \\
  W_1 & \sim \text{Bern}(0.3) \, , \,
  W_2 \sim \text{Bern}(0.3) \, , \,
  W_3 \sim \mathbb{I}(\text{Poisson}(30) + 20 > 50) \\
  A \mid W, V & \sim \text{Bern}\left(\left(1 + \exp\{0.5 - 0.5 W_1 + 0.5 W_2 -
    0.2 W_3 - 0.2 V\}\right)^{-1}\right) \\
  Z \mid A, W, V & \sim N \left(5.661 + 0.3 W_1 - 0.7 W_2 - 0.5 W_3 + 0.3 V -
    A, 1\right) \\
  Y \mid A, W, Z & \sim \text{Bern}\left(\left(1 + \exp\{-2.5 W_1 - 0.3 W_2 -
    0.4 W_3 + 0.5 A - 0.2 Z\}\right)^{-1}\right) \\
  R \mid Y & \sim
             \begin{cases}
               1, & Y = 1 \\
               \text{Bern}(\eta), & Y = 0
             \end{cases} \;.
\end{align*}
Here, $W, A, Z$, and $Y$ are defined as before, and $R$ is an indicator of
inclusion in the phase-two (case-cohort) sample. $V$ represents the unknown
deciles of baseline candidate immune correlate of protection measurements,
$\text{round}(\cdot)$ corresponds to rounding to the nearest tenth, and $\eta$
is the proportion of observations randomly selected at baseline for
post-vaccination measurement of the candidate immune correlate of protection in
the phase-two sample created by the case-cohort sampling design. We considered
four possible scenarios for $\eta$, letting $\eta$ equal $0.25$, $0.5$, $0.75$,
or $1$. The value of $1$ for $\eta$ is equivalent to the setting without
two-phase sampling, meaning all participants have their candidate immune
response mediators measured.

We compared the performance of four RR NDE candidate estimators: one-step
(``One-Step (Obs.~Weights)'') and TML (``TMLE (Obs.~Weights)'') estimators that
treat the two-phase sampling probabilities as known by design as well as
one-step (``One-Step'') and TML (``TMLE'') estimators that additionally accounts
for estimation of the outcome-dependent sampling mechanism. The One-Step
(Obs.~Weights) and TMLE (Obs.~Weights) estimators are equivalent to the one-step
and TML estimators of the RR NDE when two-phase sampling is not performed, save
that they are re-weighted by the inverse probability of selection into the
phase-two sample. Simply put, two-phase sampling weights are used as observation
weights by these estimators of the RR NDE. One-Step (Obs.~Weights) and TMLE
(Obs.~Weights) estimators are therefore expected to be less efficient than the
One-Step and TMLE estimators since they ignore information about observations
for which the candidate immune correlate were not collected. These estimators'
nuisance parameters were estimated using a Super Learner with library composed
of GLMs, Bayesian GLMs, penalized GLMs
\citep{hoerlRidgeRegressionBiased1970,
tibshiraniRegressionShrinkageSelection1996,
zouRegularizationVariableSelection2005}, and random forests
\citep{breiman2001random}.

For each level of $\eta$, we drew $500$ dataset replicates for each of $n =
1,\!000, 2,\!000, 4,\!000$, and $8,\!000$. Across each scenario and sample size,
we measured the empirical bias, scaled absolute bias, variance, and 95\% CI
coverage of the estimators. The 95\% CIs of $\psi_\text{RR-NDE}$ were obtained
by exponentiating the lower and upper bounds of the 95\% CIs of
$\log\psi_\text{RR-NDE}$, ensuring that the resulting interval estimates
respected the bounds of the parameter space. The results are presented in
Figure~\ref{fig:nde-inference-simulations}.

\begin{figure}[ht!]
  \centering
  \includegraphics[width=\textwidth]{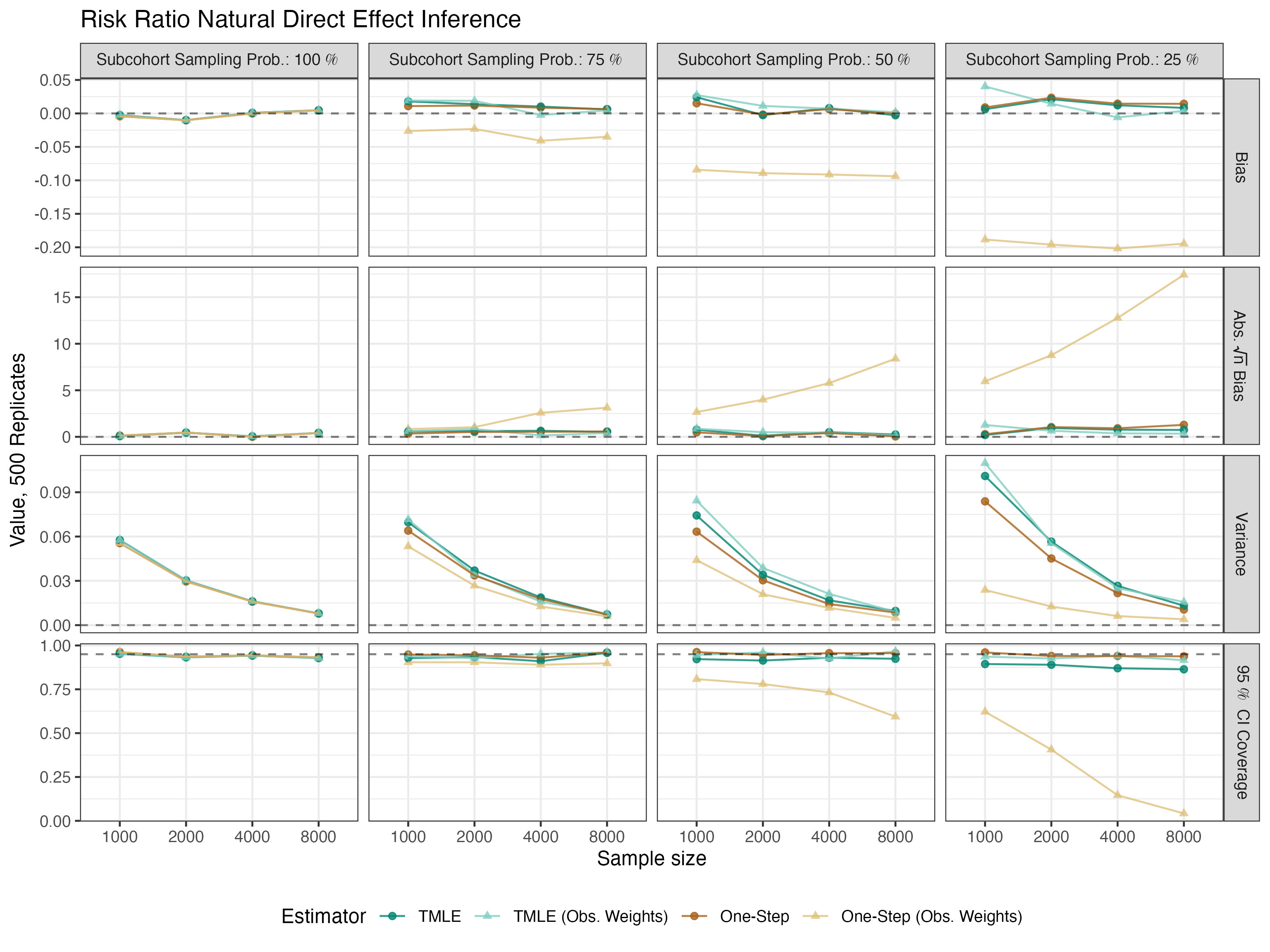}
  \caption{\emph{Simulation study: Application in a vaccine prospective
    cohort study.} The dashed lines indicate the desired asymptotic values of
    each metric, as in Figure~\ref{fig:sim-results}. The \textit{One-Step} and
    \textit{TMLE} estimators correspond to the estimators that explicitly
    estimate the two-phase sampling mechanism. The \textit{One-Step (Obs.
    Weights)} and \textit{TMLE (Obs. Weights)} estimators correspond to the
    estimators that incorporate the two-phase sampling weights as observations
    weights.}
  \label{fig:nde-inference-simulations}
\end{figure}

As expected, all estimators behaved similarly when $\eta = 1$ --- that is, when
the mediators of all study participants are measured. In all other scenarios, the
One-Step (Obs. Weight) estimator was empirically biased, and this empirical bias
increases as $\eta$ decreases. This poor performance can be attributed to the
numerator and denominator making up the true RR NDE being near the bounds of
the parameter space. One-step estimators may be sensitive to this as they are not
plug-in estimators; they are more likely to exhibit numerical instability. The
remaining estimators performed similarly with respect to empirical bias, scaled
absolute empirical bias, variance, and coverage of their 95\% confidence
intervals. With that said, the One-Step estimator had slightly lower empirical
variance than the other approximately unbiased estimators, and the TML
estimator exhibited slight under-coverage when $\eta = 0.25$.

\section{Application to the CoVPN 3008
Study}\label{sec:applic-covpn3008}


Our developments were motivated by their prospective application to the CoVPN
3008 study. The CoVPN 3008 (Ubuntu) study (NCT05168813) is a recent phase 3,
multi-center, multi-stage non-randomized clinical study of the mRNA-1273 vaccine
that enrolled adult study participants, aged 18 years or older, who were living
with HIV and/or other comorbidities found to be associated with severe COVID-19
disease in seven Eastern and Southern African countries. Prior infection by
SARS-CoV-2 was assessed using a combination of two SARS-CoV-2 serology tests and
a SARS-CoV-2 nasal swab in order to assign participants in a non-randomized
fashion into hybrid and vaccine immunity groups, with the former receiving one
dose of the study vaccine and the latter receiving two doses. The primary
endpoint was the first occurrence of COVID-19 or severe COVID-19 disease
(U.S.~CDC definition) after final vaccine dose through 165 days after the final
vaccination. Safety and efficacy results were reported
by~\citet{garrett2025hybrid} and immune correlates analyses
by~\citet{mkhize2025neutralizing}.

A secondary aim of CoVPN 3008 was to evaluate how a previously studied immune
correlate of protection, the neutralizing antibody titer (ID50,
$\log_{10}$-scaled) against the BA.4/BA.5 variants of SARS-CoV-2, may vary
between hybrid and vaccine immunity groups with respect to the COVID-19
disease endpoint of interest by six months post-vaccination. As the hybrid and
vaccine immunity groups are defined by receipt of one versus two doses of the
study vaccine, respectively, and evidence of prior infection, we let $A=1$
denote receipt of one vaccine dose and evidence of prior infection (hybrid
immunity) and $A=0$ denote receipt of two vaccine doses and no evidence of
prior infection (vaccine immunity). The candidate immune correlate of
protection is the measurement of the neutralizing antibody titer (ID50) against
the SARS-CoV-2 BA.4/BA.5 variants at Day 29 post-vaccination, which we denote
by $Z$. The outcome, denoted $Y$, is the indicator of COVID-19 or severe
COVID-19 disease (U.S.~CDC definition) starting 7 days after the Day 29 visit
post final vaccination by 165 days following receipt of the final vaccine. We
take the unmeasured exposure--mediator confounder, denoted $V$, to be IgG
concentration against the N protein, which is informative of infection history
and infection recency; in CoVPN 3008, it is a phase-two variable, measured on
only a small subset of study participants. For this choice of $V$, our immune
correlates analysis is robust to unmeasured confounding by IgG concentration
against the N protein, which is not generally expected to be measured in the
plurality of study participants in a preventive vaccine study with a rare
primary outcome. As such, we do not use data on $V$ collected in study
participants in the phase-two sample. For this analysis, we adopt an adjustment
set defined by~\citet{garrett2025hybrid} for their reported analysis, consisting
of enrollment region, enrollment period, baseline HIV status, baseline TB
status, and a baseline infection risk score; we denote these $W$.

We report our findings in terms of RR NDE and consider an informal sensitivity
analysis designed to examine the degree of confounding necessary to undermine
any previously posited notion of mediation of the beneficial effect of hybrid
versus vaccine immunity on COVID-19 disease through Day 29 neutralizing antibody
titer (ID50), previously validated in vaccine--placebo comparisons in the
context of vaccine efficacy trials~\citep{gilbert2022covid}. Specifically, we
test the following hypothesis for a fixed value of $\psi_{\text{RR}} < 1$
representing a \textit{protective} total effect:
$H_0: \psi_{\text{RR-NDE}} \leq \psi_{\text{RR}}$ versus
$H_1: \psi_{\text{RR-NDE}} > \psi_{\text{RR}}$.
Under $H_1$, it follows that $\psi_{\text{RR-NIE}} < \psi_{\text{RR}} <
\psi_{\text{RR-NDE}}$, suggesting evidence that the postulated protective
effect operates, at least in part, through neutralizing antibody titer. In our
sensitivity analysis, we consider estimation of $\psi_{\text{RR-NDE}}$ based on
our novel identification result of the NDE, which holds regardless of possible
exposure--mediator confounding by $V$; however, $\psi_{\text{RR}}$ is based on
an estimate derived from results reported by~\citet{garrett2025hybrid}, arrived
at via the standard set of no unmeasured confounding assumptions previously
described in Section~\ref{subsec:ident:ss}.
As such, we consider a range of plausible values for $\psi_{\text{RR}}$,
anchored around the estimate derived from~\citet{garrett2025hybrid}, with the
goal of considering how much $\psi_{\text{RR}}$ would need to differ from the
reported estimate---due to unmeasured confounding---in order for available
evidence to be insufficient to reject $H_0$. We operationalized our sensitivity
analysis using the cross-fitted case-cohort TML estimator from
Section~\ref{sec:inference}, denoted $\hat\psi_\text{RR-NDE}$, using a super
learner ensemble~\citep{vdl2007super} with a library of candidate algorithms
composed of regularized regression (lasso and ridge), multivariate adaptive
regression splines, and two variants of random forests with different numbers of
trees. Two-phase sampling weights based on the case-cohort design were
incorporated, and the analysis included $n=310$ phase-two participants, among
whom $n=122$ experienced the COVID-19 endpoint by month six post-vaccination.
Additional details on the case-cohort design have previously been reported by
\citet{mkhize2025neutralizing}. 

Our analysis yielded an estimate $\hat\psi_\text{RR-NDE} = 0.570$,
whereas~\citet{garrett2025hybrid} reported a $\psi_\text{RR}$ point estimate of
$0.613$ when comparing hybrid and vaccine immunity groups; the ordering of the
two point estimates is inconsistent with neutralizing antibody titer mediating a
portion of hybrid immunity's protective effect. Further, the one-sided test of
$H_0: \psi_{\text{RR-NDE}} \leq 0.613$ versus $H_1: \psi_{\text{RR-NDE}} >
0.613$ yields a test statistic
$t = (\log(\hat\psi_{\text{RR-NDE}}) - \log(0.613)) /
\hat{\sigma}(\log(\hat\psi_{\text{RR-NDE}})) \approx -0.097$, where
$\hat{\sigma}(\log(\hat\psi_{\text{RR-NDE}}))$ is the estimated standard
deviation of the efficient influence function of
$\log(\hat\psi_{\text{RR-NDE}})$. Assuming conditions for the asymptotic
linearity of $\hat\psi_\text{RR-NDE}$ are satisfied such that $t$ is normally
distributed under the null, we find that this test statistic corresponds to a
p-value of $0.539 > 0.1$, thereby failing to reject $H_0$. However, the $95\%$
Wald-type confidence interval of $\psi_{\text{RR-NDE}}$ is $(0.186; 1.743)$,
indicating substantial uncertainty about the target parameter's underlying true
value. It is impossible to rule out the possibility of a protective effect.
\begin{figure}[!ht]
  \centering
  \includegraphics[width=\textwidth]{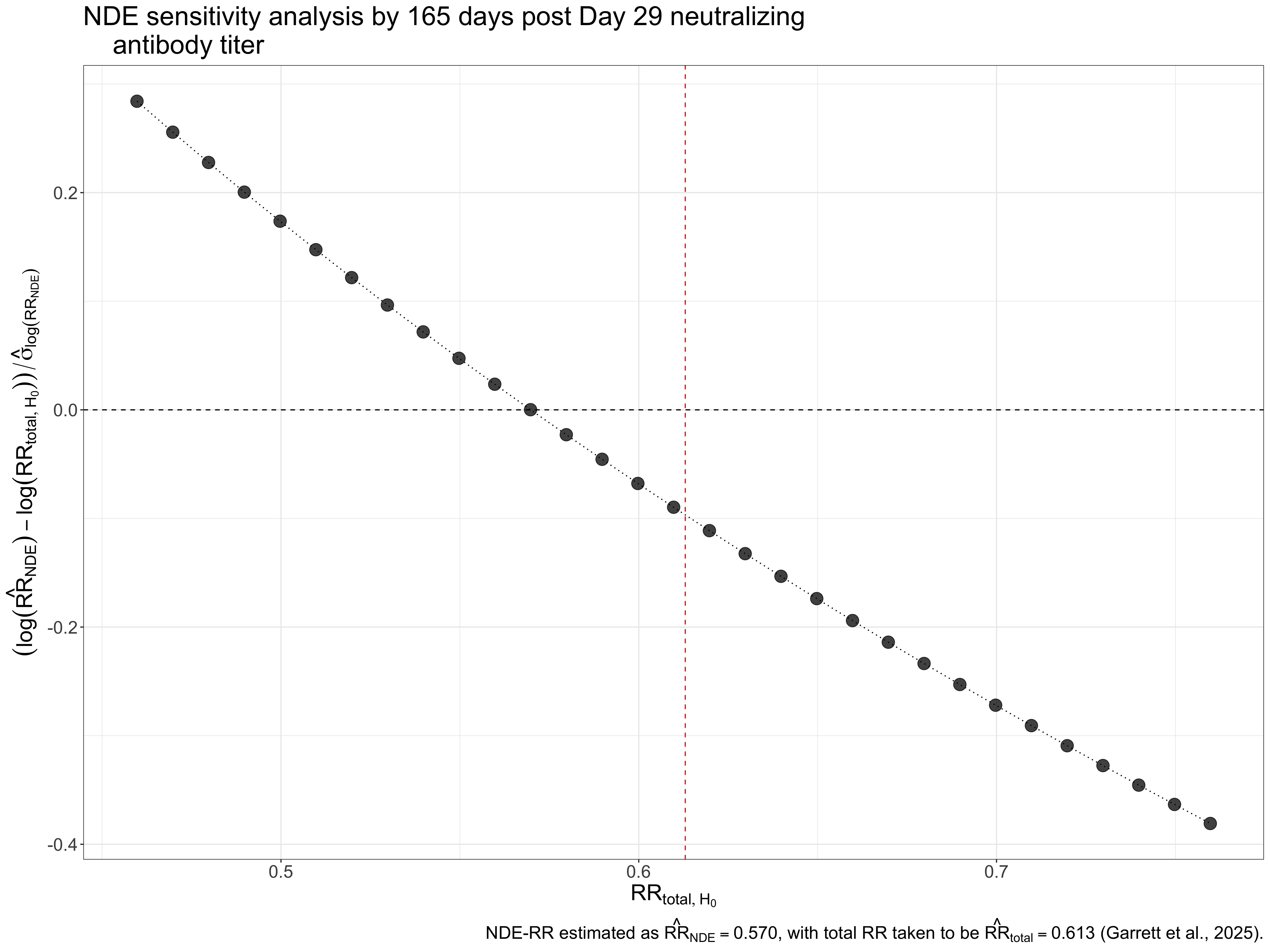}
  \caption{\emph{Application in the CoVPN 3008 study.} Test statistic $t$
  (y-axis) measuring the degree of deviation from $H_0$ versus plausible more
  protective (to left) and less protective (to right) total $\psi_\text{RR}$
  values (x-axis). The red vertical dashed line corresponds to the value of
  $\psi_\text{RR}$ considered as $H_0$ in our primary analysis. The horizontal dashed
  line corresponds to the null hypothesis of no protective effect.}
  \label{fig:nde-sensitivity}
\end{figure}%
Figure~\ref{fig:nde-sensitivity} presents a sensitivity
analysis showing the test statistic $t = (\log(\hat\psi_{\text{RR-NDE}}) -
\log(\psi_{\text{RR}})) / \hat{\sigma}(\log(\psi_{\text{RR-NDE}}))$ for plausible
values of $\psi_{\text{RR}}$. The vertical dashed line indicates
$\psi_{\text{RR}} = 0.613$ from~\citet{garrett2025hybrid}; the horizontal dashed
line at $t = 0$ corresponds to $\text{RR}_{\text{NDE}} = \psi_{\text{RR}}$
(no indirect effect). The test statistic remains below conventional significance
thresholds across the examined range, though some $\psi_\text{RR}$ point
estimates suggest evidence of mediation.

In the context of our methodological contributions, of which a key innovation is
permitting identification of the NDE under unmeasured confounding of the
exposure--mediator relationship, we obtain valid inference about the NDE even
when unmeasured confounding by IgG concentration against the N protein may cast
doubt upon stronger and more typical no unmeasured confounding assumptions. Such
assumptions have  necessarily been invoked in prior immune correlates
analyses~\citep{mkhize2025neutralizing} of CoVPN 3008.

\section{Discussion}\label{sec:discussion}

We have outlined relaxed, nonrestrictive conditions for identification of the
NDE from observational data in the presence of unmeasured exposure--mediator
confounding. In addition to the standard identification
assumptions---exposure--outcome (\ref{ass:original_A1}) and mediator--outcome
(\ref{ass:original_A3}) conditional randomization, and positivity
(\ref{ass:posA} and \ref{ass:posZ})---we show that the NDE is identified so long
as no unmeasured endogenous pathway exists between unmeasured exposure--mediator
confounders and the outcome (\ref{ass:no-V-to-Y}), and that the
conditional expectation of the outcome does not depend on these confounders
(\ref{ass:equalE}). Put
another way, identification of the NDE is possible when the unmeasured
confounders' effect on the outcome operates exclusively through the mediators.

Opportunities for using this methodology are numerous. One example, which
motivated our work and is considered in the application of Section
\ref{sec:applic-covpn3008}, lies in the study of how immune-mediated protection,
induced by both vaccination and prior infection (typically unmeasured), may
operate differentially through immune correlates of protection. We highlight
that our approach can be informative of mechanism under an identification
strategy more conservative than the standard. Other typically unmeasured
confounders are numerous---from genetic factors predisposing individuals to
heightened or suppressed immune responses, unmeasurable-at-baseline
immunocompromised status, or molecular features of immune system health, to name
but a few. Applications beyond the study of immune correlates of protection
include studies of the putative mechanisms by which environmental exposures
cause disease and those of the mechanisms of medical interventions on disease
despite missing or unmeasured information in electronic health records, among
myriad others.

Defining and evaluating causal effects along pathways informative of mechanism
is a longstanding problem in statistical science, and our novel identification
strategy effectively expands the array of settings in which we may learn about
how exposures confer protection or cause harm. Future work will focus on
exploring further public health applications.



\subsection*{Code}

Code for reproducing the simulation studies is publicly available on GitHub
at \url{https://github.com/PhilBoileau/pub_NDE-unmeasured-confounding}.

\subsection*{Acknowledgments}

PB gratefully acknowledges the support of the Fonds de recherche du Qu\'{e}bec
- Nature et technologies and the Natural Sciences and Engineering Research
Council of Canada. NSH is grateful for the prior support of the National Science
Foundation (award no.~DMS 2102840). IM is grateful for the support of the
Harvard Data Science Initiative and the National Institute of Allergy and
Infectious Diseases (award no.~R01 AI074345). MJvdL was partially supported by a
grant from the National Institute of Allergy and Infectious Diseases (award
no.~R01 AI074345). PBG is grateful for support from the National Institute of
Allergy and Infectious Diseases (award no.~R37 AI054165). The content is solely
the responsibility of the authors and does not represent the official views of
the National Institutes of Health.


\appendix

\section{Equivalence of the Conditional Expectation of Y with and without V}
\label{sec:equivE}

In this section, we elaborate on the conditions under which
Assumption~\ref{ass:equalE}, on the equivalence of the conditional
expectation of the outcome $Y$ with and without unmeasured confounders $V$,
holds. Note that Assumption~\ref{ass:equalE} implicitly places restrictions
on the joint distribution of the exogenous factors $U$, a significantly weaker
condition than the typical assumption of independence of exogenous
errors~\citep{pearl2000causality}. In the following, we propose a set of
additional assumptions necessary for Assumption~\ref{ass:equalE} to hold in
the system depicted by the DAG in Figure~\ref{fig:DAG}.

\vspace{0.5em}
\begin{assumptionb}[\textit{Complete-data treatment-outcome conditional
randomization}]\label{ass:restrict_UA}
$U_A \indep U_Y \mid (W,V)$.
\end{assumptionb}
\vspace{0.5em}
\begin{assumptionb}[\textit{Complete-data mediator-outcome conditional
randomization}]\label{ass:restrict_UZ}
$U_Z \indep U_Y \mid (W,V)$.
\end{assumptionb}
\vspace{0.5em}
\begin{assumptionb}[\textit{Complete-data outcome-unmeasured confounders
conditional independence}]\label{ass:restrict_UY}
$U_Y \indep V \mid W$.
\end{assumptionb}
\vspace{0.5em}
\begin{lemma}\label{lemma:equivE}
  Suppose Assumptions~\ref{ass:restrict_UA},~\ref{ass:restrict_UZ},
  and~\ref{ass:restrict_UY} hold. Under Assumption~\ref{ass:no-V-to-Y},
  \ref{ass:posA} and \ref{ass:posZ} we then have the following
  equivalence: $\E[Y \mid W, V, A = a, Z] = \E[Y \mid W, A = a, Z]$.
\end{lemma}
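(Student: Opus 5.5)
The approach is to use Assumption~\ref{ass:no-V-to-Y} to write $Y = f_Y(W, A, Z, U_Y)$. The claim then reduces to showing that the conditional law of $U_Y$ given $(W, V, A, Z)$ equals its conditional law given $W$ alone. Once that holds, both conditional expectations equal the same integral:
\begin{equation*}
\E[Y \mid W, V, A=a, Z] = \int f_Y(W, a, Z, u)\, d\P(u \mid W).
\end{equation*}
Assumptions~\ref{ass:posA} and~\ref{ass:posZ} are used only to guarantee that conditioning on $A = a$ and on $Z$ is well defined, so the conditional expectations exist on the relevant support.

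\textbf{Step 1.} Condition on $(W, V)$. By the SCM~\eqref{SEM}, $A = f_A(W, V, U_A)$ and $Z = f_Z(W, V, A, U_Z)$. Given $(W,V)$, the pair $(A, Z)$ is therefore a measurable function of $(U_A, U_Z)$. From Assumptions~\ref{ass:restrict_UA} and~\ref{ass:restrict_UZ}, I want to conclude $U_Y \indep (A, Z) \mid (W, V)$.

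\textbf{Step 2.} Combine Step 1 with Assumption~\ref{ass:restrict_UY}, $U_Y \indep V \mid W$. The contraction property of conditional independence then gives $U_Y \indep (V, A, Z) \mid W$. In particular, the law of $U_Y$ given $(W, V, A=a, Z)$ is the law of $U_Y$ given $W$, which yields the displayed identity for the left-hand side.

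\textbf{Step 3.} For the right-hand side, apply the tower property:
\begin{equation*}
\E[Y \mid W, A=a, Z] = \E\big[\E[Y \mid W, V, A=a, Z] \mid W, A=a, Z\big].
\end{equation*}
By Step 2 the inner quantity does not depend on $V$, so it passes through the outer expectation unchanged. Alternatively, Step 2 also gives $U_Y \indep (A, Z) \mid W$ directly, which leads to the same integral.

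\textbf{Main obstacle.} The difficulty is Step 1. Assumptions~\ref{ass:restrict_UA} and~\ref{ass:restrict_UZ} as stated are only pairwise conditional independences. Pairwise independence does not in general imply the joint statement $(U_A, U_Z) \indep U_Y \mid (W, V)$, and the joint statement is what is needed because $Z$ depends on $U_A$ through $A$.

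I would first try a sequential argument. Assumption~\ref{ass:restrict_UA} gives $U_Y \indep A \mid (W,V)$. One then needs $U_Y \indep Z \mid (W, V, A)$, which amounts to $U_Z \indep U_Y \mid (W, V, U_A)$, and that is again slightly stronger than what is assumed. If the sequential route does not close, I would state explicitly that the two assumptions are to be read jointly, i.e.\ $(U_A, U_Z) \indep U_Y \mid (W,V)$. This reading holds, for instance, when $U_A$ and $U_Z$ are generated independently of $U_Y$, and under it Steps 1–3 go through as described.
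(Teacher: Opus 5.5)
Your proposal is correct and follows the paper's proof. The paper likewise rewrites $Y=f_Y(W,A,Z,U_Y)$ via Assumption~\ref{ass:no-V-to-Y}, drops the conditioning on $(A,Z)$ given $(W,V)$ using Assumptions~\ref{ass:restrict_UA} and~\ref{ass:restrict_UZ}, and then drops $V$ using Assumption~\ref{ass:restrict_UY}; your contraction step makes that last move explicit, and your tower-property step for the right-hand side is one the paper leaves implicit.

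The obstacle you flag is genuine, and the paper shares it: its step citing Assumptions~\ref{ass:restrict_UA} and~\ref{ass:restrict_UZ} tacitly uses the joint independence $(U_A,U_Z)\indep U_Y \mid (W,V)$. No sequential argument can avoid this. Take $W$ trivial, $V,U_A,U_Z$ i.i.d.\ Bernoulli$(1/2)$, $U_Y=U_A\oplus U_Z$ (addition mod $2$), $A=U_A\oplus V$, $Z=U_Z$ and $Y=U_Y$. Every stated assumption holds, yet $\E[Y\mid V,A=a,Z]=a\oplus V\oplus Z$ while $\E[Y\mid A=a,Z]=1/2$. So your joint reading of the two assumptions is necessary, not merely convenient.
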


\begin{proof}
  Since $\E[Y(a) \mid W,V] = \E[Y \mid W, V, A=a, Z=z]$ (by consistency,
  definition of counterfactuals) and $A \in \{0,1\}$, it suffices to show that
  $\E[Y \mid W, V, A=1, Z=z]$ does not depend on $V$. To see this, consider the
  following:
\begin{align*}
  \E[Y \mid W=w, V, A=1, Z=z] &= \E[f_Y(w, V, 1, z, U_Y) \mid W=w, V, A=1, Z=z]
  ~\text{(by definition)}\\
  &= \E[f_Y(w, 1, z, U_Y) \mid W=w, V, A=1, Z=z]
  ~\text{(by Assumption~\ref{ass:no-V-to-Y})}\\
  &= \E[f_Y(w, 1, z, U_Y) \mid W=w, V, f_A(w, V, U_A) =1, f_Z(w,V,1, U_Z)=z]
  \\ &= \E[f_Y(w, 1, z, U_Y) \mid W=w, V]
  ~\text{(by Assumptions~\ref{ass:restrict_UA} and~\ref{ass:restrict_UZ})} \\
  &= \E[f_Y(w, 1, z, U_Y) \mid W=w]
  ~\text{(by Assumptions~\ref{ass:restrict_UY}, \ref{ass:posA},
   \ref{ass:posZ})}.
\end{align*}
\end{proof}
We emphasize that none of the assumptions necessary to verify
Assumption~\ref{ass:equalE} place any restrictions on the distribution of
the observed data unit $O=(W, A, Z, Y)$. Assumption~\ref{ass:restrict_UA} is a
treatment randomization assumption standard in the causal inference literature.
If $U_Y$ is independent of $U_A$ given $(W,V)$, then $Y(a)$ is independent of
$A$ given $(W,V)$, analogous to Assumption~\ref{ass:original_A1}.
Assumption~\ref{ass:restrict_UZ} implies independence of $U_Z$ and $U_Y$ given
$(W, V)$. If $U_Z$ is independent of $U_Y$ given $(W,V)$, then $Y(a)$ is
independent of $Z$ given $(W,V)$, analogous to
Assumption~\ref{ass:original_A3a}. Since $Y = f_Y(W, A, Z, U_Y)$ is already
a function of $Z$, assuming independence of $U_Z$ and $U_Y$ given $(W, V)$ is a
weak assumption. Finally, given that we already assumed no direct path of $V$ to
$Y$ (the path being blocked by $Z$) and that $U_Z$ is independent of $U_Y$ given
$(W, V)$, Assumption~\ref{ass:restrict_UY} does not add a much stronger
restriction on $U$---in fact, $U_V$ and $U_Z$ can still be dependent.

\section{Identification Results}

\subsection{The Controlled Direct Effect}\label{sec:cde-ident}

We define the following causal target parameter of interest, corresponding to
the controlled direct effect (CDE), as
\begin{align*}
    \Psi^F_{\text{CDE}}(\Pt_{(U,X),0}) &= \E_{\Pt_{(U,X),0}}[Y(1,z) - Y(0,z)].
\end{align*}

\vspace{3mm}
\begin{lemma}\label{lemma:cde-ident}
  Suppose the identifying Assumptions~\ref{ass:posA}, \ref{ass:posZ},
  \ref{ass:no-V-to-Y} and~\ref{ass:equalE} hold. Then, it follows that
\begin{equation*}
    \Psi^F_{\text{CDE}}(\Pt_{(U,X),0}) = \E[\E[Y \mid W, A = 1, Z = z] -
                                         \E[Y \mid W, A = 0, Z = z]].
\end{equation*}
\end{lemma}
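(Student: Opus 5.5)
Fix $z\in\mathcal{Z}$ and $a\in\{0,1\}$. The plan is to identify $\E[Y(a,z)\mid W]$ with $\overline{Q}_Y(W,a,z)$ and then take differences and average over $p_W$. The route goes through the complete-data conditioning set $(W,V)$, where the structural equations make the counterfactual transparent, and afterwards removes $V$ using Assumption~\ref{ass:equalE}, in the same spirit as the proof of Lemma~\ref{lemma:equivE}.

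\textbf{Step 1 (counterfactual given $(W,V)$).} By the SCM~\eqref{SEM} and Assumption~\ref{ass:no-V-to-Y}, $Y(a,z)=f_Y(W,a,z,U_Y)$. I will argue that
\[
\E[Y(a,z)\mid W,V]=\E[f_Y(W,a,z,U_Y)\mid W,V,A=a,Z=z]=\E[Y\mid W,V,A=a,Z=z].
\]
The first equality is the step that needs justification: conditioning on $\{A=a,Z=z\}$, i.e.\ on $\{f_A(W,V,U_A)=a,\ f_Z(W,V,a,U_Z)=z\}$, must not change the conditional law of $U_Y$ given $(W,V)$. This is exactly the content the paper attributes to Assumption~\ref{ass:equalE}, and it is also how the identity $\E[Y(a)\mid W,V]=\E[Y\mid W,V,A=a,Z=z]$ is invoked at the start of the proof of Lemma~\ref{lemma:equivE}. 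The second equality is consistency. Positivity (Assumptions~\ref{ass:posA}, \ref{ass:posZ}) guarantees that the conditioning events have positive probability or density, so the expressions are well defined.

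\textbf{Step 2 (removing $V$).} Apply Assumption~\ref{ass:equalE}, or more generally the conclusion of Lemma~\ref{lemma:equivE}, which gives the statement for both $a=0$ and $a=1$. This yields $\E[Y\mid W,V,A=a,Z=z]=\overline{Q}_Y(W,a,z)$. The right-hand side does not depend on $V$, so iterated expectations over $V\mid W$ give $\E[Y(a,z)\mid W]=\overline{Q}_Y(W,a,z)$.

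\textbf{Step 3 (assembly).} Take the difference for $a=1$ and $a=0$ and integrate against $p_W$. This gives $\Psi^F_{\text{CDE}}=\E[\overline{Q}_Y(W,1,z)-\overline{Q}_Y(W,0,z)]$, which is the claim.

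\textbf{Main obstacle.} The delicate step is the first equality in Step 1. Assumption~\ref{ass:equalE} is stated only for $A=1$ and only as an equality of observed conditional expectations, not as an independence among the exogenous errors. I would therefore make explicit that the argument uses the $a=0$ analogue as well, as supplied by Lemma~\ref{lemma:equivE}. I would also make explicit the implicit no-confounding content given $(W,V)$ that the paper adopts in the opening line of the proof of Lemma~\ref{lemma:equivE}, namely Assumptions~\ref{ass:restrict_UA}, \ref{ass:restrict_UZ}, and \ref{ass:restrict_UY}, and cite it rather than rederive it.
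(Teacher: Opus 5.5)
Your proposal is correct and follows the paper's own argument. The paper's proof simply invokes Lemma~\ref{lemma:equivE}: that lemma's opening identity is your Step~1 and its conclusion is your Step~2, and the paper then averages over $V\mid W$ and $W$ exactly as in your Step~3.

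You are right that this route silently uses exogenous-error conditions beyond the lemma's listed assumptions, and the same is true of the paper's proof. One refinement: what Step~1 actually needs is the joint statement $U_Y\indep(U_A,U_Z)\mid(W,V)$. The pairwise Assumptions~\ref{ass:restrict_UA} and~\ref{ass:restrict_UZ} do not by themselves imply it, so you should assume the joint form directly rather than cite those two.
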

\begin{proof}
  Note that by Lemma \ref{lemma:equivE}, it follows that $\E[Y(a,z) \mid W,V]
  = \E[Y(a,z) \mid W]$. Thus, the proof is a direct consequence of
  applying Lemma~\ref{lemma:equivE}.
\end{proof}

\subsection{The Natural Direct Effect}\label{sec:nde-ident}

In the main text, we define the following causal target parameter
of interest:
\begin{equation*}
    \Psi^F_{\text{NDE}}(\Pt_{(U,X),0}) = \int_{\mathcal{W}} \int_{\mathcal{Z}}
    \E[Y(1, z) - Y(0, z) \mid W = w] g_Z(z \mid w, A=0) p_W(w) \ .
\end{equation*}
The identification results are provided in the following lemma as well as under
Theorem~\ref{thm:id-NDE} of the main text. We emphasize that
the statistical target has two interpretations. The first corresponds to an
average of controlled direct effects, averaged over all values of $Z \in
\mathcal{Z}$. The second interpretation consists of the contrast in the
counterfactual mean outcome under exposure $A$ set deterministically as $A = 1$
or $A = 0$, and with $Z$ drawn from the conditional distribution with density
$g_{Z}(z \mid W, A=0)$. The two interpretations differ in the conditional
distribution of the mediator used; for the NDE, $P(Z(0) < z \mid W)$ is used as
opposed to $P(Z < z \mid W, A = 0)$. Note, however, that $P(Z(0) < z \mid W)$ is
in fact equal to $\P_{\Pt_0}(Z < z \mid W, A=0)$ by the consistency assumption
for identification.

\vspace{3mm}
\begin{lemma}
  Suppose the identifying Assumptions~~\ref{ass:posA}, \ref{ass:posZ},
  \ref{ass:no-V-to-Y} and~\ref{ass:equalE} hold. Then, we have that
\begin{align*}
    \Psi^F_{\text{NDE}}(\Pt_{(U,X),0}) =
       \E[\E[\E[Y \mid W, A = 1, Z] - \E[Y \mid W, A = 0, Z] \mid W, A = 0]]
       \ .
\end{align*}
\end{lemma}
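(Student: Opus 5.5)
The plan is to prove the lemma by reducing the inner conditional expectation $\E[Y(a,z)\mid W=w]$ to the observed regression $\overline{Q}_Y(w,a,z)$ for each fixed $z$ and $a\in\{0,1\}$. Once that is done, we integrate against $g_Z(z\mid w,0)\,p_W(w)$. No treatment of $Z(0)$ is needed beyond this, because the causal estimand as defined already uses the observed-data density $g_Z(z\mid w,A=0)$ as the mediator distribution. The result is then an iterated-expectation rewriting, exactly as in the displayed form of Theorem~\ref{thm:id-NDE}.

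\textbf{Step 1.} Fix $w$, $z$, and $a$. Working in the complete data model, I will condition additionally on $V$ and write
\[
\E[Y(a,z)\mid W=w]=\E\bigl[\E[Y(a,z)\mid W=w,V]\mid W=w\bigr].
\]
By Assumption~\ref{ass:no-V-to-Y}, $Y(a,z)=f_Y(w,a,z,U_Y)$ on $\{W=w\}$. Under the conditional-randomization structure that underlies Lemma~\ref{lemma:equivE}, this gives
\[
\E[Y(a,z)\mid W=w,V]=\E[Y\mid W=w,V,A=a,Z=z].
\]
The conditioning event has positive probability, or positive density, by Assumptions~\ref{ass:posA} and~\ref{ass:posZ}. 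This is the same consistency step used in the proofs of Lemma~\ref{lemma:equivE} and Lemma~\ref{lemma:cde-ident}, and I will cite it from there.

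\textbf{Step 2.} Remove $V$. For $a=1$ this is exactly Assumption~\ref{ass:equalE}: $\E[Y\mid w,V,A=1,z]=\overline{Q}_Y(w,1,z)$. For $a=0$ I will invoke Lemma~\ref{lemma:equivE}, which gives the equality for every $a$. Since the resulting quantity no longer depends on $V$, the outer expectation over $V\mid W=w$ is trivial. Hence $\E[Y(a,z)\mid W=w]=\overline{Q}_Y(w,a,z)$ for $a\in\{0,1\}$, which is Lemma~\ref{lemma:cde-ident} stated conditionally on $W$.

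\textbf{Step 3.} Substitute this into the definition of $\Psi^F_{\text{NDE}}$. This yields
\[
\int_{\mathcal W}\int_{\mathcal Z}\bigl(\overline{Q}_Y(w,1,z)-\overline{Q}_Y(w,0,z)\bigr)g_Z(z\mid w,0)\,dz\,p_W(w)\,dw.
\]
The inner integral is $\E[\overline{Q}_Y(W,1,Z)-\overline{Q}_Y(W,0,Z)\mid W=w,A=0]$, and the outer integral is the expectation over $W$. This is the claimed triple expectation. Assumption~\ref{ass:posZ} guarantees that $\overline{Q}_Y(w,1,z)$ is well defined $g_Z(\cdot\mid w,0)$-almost everywhere, and Assumption~\ref{ass:posA} does the same for $w$.

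\textbf{Main obstacle.} The hardest step is Step 2 for $a=0$. Assumption~\ref{ass:equalE} is only stated for $A=1$, so the $A=0$ arm must rely on Lemma~\ref{lemma:equivE}. That lemma's proof notes that it suffices to treat $A=1$ and asserts that the argument is symmetric. Alternatively, I would note that the identical chain of equalities in that proof goes through with $1$ replaced by $0$. I will also state explicitly that the passage from $Y(a,z)$ to $Y$ on $\{A=a,Z=z\}$ uses consistency together with the independence of $U_Y$ from $(U_A,U_Z)$ given $(W,V)$.
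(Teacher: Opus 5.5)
Your proposal is correct and follows the paper's own proof. That proof invokes Lemma~\ref{lemma:equivE} to drop $V$ from $\E[Y(a,z)\mid W,V]$, uses Lemma~\ref{lemma:cde-ident} to identify $\E[Y(a,z)\mid W]$ with $\overline{Q}_Y(W,a,z)$, and integrates against $g_Z(z\mid w,0)\,p_W(w)$. As you rightly flag, both routes rely on the exchangeability assumptions behind Lemma~\ref{lemma:equivE}, namely $U_Y \indep (U_A,U_Z)\mid (W,V)$ (jointly, as you state it) and $U_Y \indep V\mid W$; these are not among the four listed hypotheses but are genuinely needed, in particular for the $A=0$ arm.
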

%
\begin{proof}
  Again, by Lemma~\ref{lemma:equivE}, we have $\E[Y(a,z) \mid W, V] =
  \E[Y(a,z) \mid W]$. This result, then, is a direct consequence of
  Lemmas~\ref{lemma:equivE} and~\ref{lemma:cde-ident}.
\end{proof}

\subsection{The Average Treatment Effect}\label{sec:ate-ident}


We now consider the case in which our causal target parameter is the average
treatment effect (ATE), defined in Equation~\eqref{eqn:ate} as
\begin{equation*}
     \Psi_{\text{ATE}}^F(\Pt_{(U,X),0}) = \E_{\Pt_{(U,X),0}}[Y(1) - Y(0)] \ ,
\end{equation*}
where $Y(a) = f_Y(W, a, Z(a), U_Y)$ and $Z(a) = f_Z(W, V, a, U_Z)$ are the
counterfactuals of $Y$ and $Z$, respectively, under a static intervention that
sets $A=a$. In addition to all prior causal assumptions, by also assuming the
following (namely, Assumptions~\ref{ass:restrict_AZ} and~\ref{ass:cond_Z}), we
obtain an identification result for the counterfactual mean $E_{U,X}[Y(a)]$.

\vspace{2mm}
\begin{assumptionb}[\textit{Complete-data treatment-mediator
conditional randomization}]\label{ass:restrict_AZ}
\begin{align*}
U_A \indep U_Z \mid (W, V).
\end{align*}
\end{assumptionb}

\vspace{2mm}
\begin{assumptionb}[\textit{Equivalence of conditional distributions of the
mediator}]\label{ass:cond_Z}
\begin{align*}
\P_{\Pt_{(U,X),0}}(Z(1) < z \mid W) = \P_{\Pt_{(U,X),0}}(Z(0) < z \mid W).
\end{align*}
\end{assumptionb}

\vspace{3mm}
\begin{lemma}
  Suppose assumptions~\ref{ass:equalE},~
  \ref{ass:restrict_AZ} and~\ref{ass:cond_Z} hold. It immediately follows that
\begin{equation*}
    \E_{\Pt_{(U,X),0}}[Y(1)-Y(0)] = \E[\E[Y \mid W, A = 1, Z] -
      \E[Y \mid W, A = 0, Z]] \ .
\end{equation*}
\end{lemma}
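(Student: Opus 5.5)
The plan is to handle the two counterfactual means $\E[Y(a)]$, $a\in\{0,1\}$, separately and subtract. Each one is rewritten, step by step, as an iterated expectation of observed quantities.

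\textbf{Step 1 (decompose over the mediator).} By Assumption~\ref{ass:no-V-to-Y}, $Y(a)=f_Y(W,a,Z(a),U_Y)=Y(a,Z(a))$. Conditioning on $(W,V,Z(a))$ gives
\begin{equation*}
\E[Y(a)] = \E\bigl[\E[Y(a,Z(a))\mid W,V,Z(a)]\bigr].
\end{equation*}

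\textbf{Step 2 (replace the inner counterfactual mean by $\overline{Q}_Y$).} I want
\begin{equation*}
\E[Y(a,z)\mid W,V,Z(a)=z]=\E[Y\mid W,V,A=a,Z=z],
\end{equation*}
followed by Assumption~\ref{ass:equalE} to drop $V$ and obtain $\overline{Q}_Y(W,a,z)$. Assumption~\ref{ass:restrict_AZ} lets us condition on $A=a$ without disturbing the law of $Z(a)$ given $(W,V)$. After that, consistency replaces $Z(a)$ with $Z$ and $Y(a,z)$ with $Y$ on $\{A=a,Z=z\}$. The link between $U_Y$ and $(U_A,U_Z)$ is supplied by the same randomization-type reasoning as in the proof of Lemma~\ref{lemma:equivE}, where $\E[Y\mid W,V,A,Z]$ is shown to be a structural regression. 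Positivity (Assumptions~\ref{ass:posA}, \ref{ass:posZ}) keeps these conditional expectations well defined.

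\textbf{Step 3 (integrate out $Z(a)$).} Since the integrand $\overline{Q}_Y(W,a,z)$ no longer depends on $V$, marginalize $V$ to get
\begin{equation*}
\E[Y(a)]=\E\Bigl[\int \overline{Q}_Y(W,a,z)\,dP(Z(a)\le z\mid W)\Bigr].
\end{equation*}
Assumption~\ref{ass:cond_Z} makes the law of $Z(a)\mid W$ the same for $a=0,1$; call it $F(\cdot\mid W)$. Subtracting the two means then gives $\E\bigl[\int\{\overline{Q}_Y(W,1,z)-\overline{Q}_Y(W,0,z)\}\,dF(z\mid W)\bigr]$.

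\textbf{Step 4 (identify $F$).} The claimed right-hand side $\E[\E[Y\mid W,A=1,Z]-\E[Y\mid W,A=0,Z]]$ averages over the observed marginal law of $Z\mid W$. This law is the mixture over $A$ of $P(Z\le z\mid W,A=a)$. Under consistency together with Assumption~\ref{ass:restrict_AZ}, each $P(Z\le z\mid W,A=a)$ should equal $P(Z(a)\le z\mid W)$, following the paper's remark in Section~\ref{sec:nde-ident} that $P(Z(0)<z\mid W)=\P_{\Pt_0}(Z<z\mid W,A=0)$. Both mixture components then equal $F(\cdot\mid W)$, so the mixture is $F(\cdot\mid W)$ as well. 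That converts the inner integral into $\E[\,\cdot\mid W]$ with $Z$ drawn from its observed conditional law, and taking the outer expectation gives the claim.

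\textbf{Main obstacle.} The delicate step is Step 4, specifically the claim $P(Z(a)\le z\mid W)=P(Z\le z\mid W,A=a)$ when $V$ confounds $A$ and $Z$. Assumption~\ref{ass:restrict_AZ} only yields independence given $(W,V)$, so I would first try to establish the conclusion directly at the level of the mixture: show that $P(Z\le z\mid W)$ coincides with $F$ by averaging $P(Z(A)\le z\mid W,V,A)$ over $A$ and invoking Assumption~\ref{ass:cond_Z}. This works because $Z(1)$ and $Z(0)$ share the same conditional law given $W$, and the observed $Z$ is a selection between them. If that does not go through, I would adopt the paper's stated consistency convention as the justification, exactly as was done for Theorem~\ref{thm:id-NDE}.
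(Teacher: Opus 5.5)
Your Steps 1--3 are fine. They rely on independence conditions on $U_Y$ of the kind in the appendix (the paper's proof does the same via Assumption~\ref{ass:restrict_UA}), which the lemma leaves implicit. They correctly give $\E[Y(1)-Y(0)]=\E\bigl[\int\{\overline{Q}_Y(W,1,z)-\overline{Q}_Y(W,0,z)\}\,dF(z\mid W)\bigr]$.

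The gap is Step~4. You rightly flagged it, but neither of your proposed resolutions closes it. The mixture argument fails. With $g(W,V)=\P(A=1\mid W,V)$ and Assumption~\ref{ass:restrict_AZ},
\begin{equation*}
\P(Z\le z\mid W)=\E\bigl[g(W,V)\,\P(Z(1)\le z\mid W,V)+\{1-g(W,V)\}\,\P(Z(0)\le z\mid W,V)\mid W\bigr].
\end{equation*}
Assumption~\ref{ass:cond_Z} equates the two inner probabilities only after averaging over $V$ given $W$. It says nothing about them once they are weighted by the $V$-dependent $g$ and $1-g$. A selection between two identically distributed variables need not share their law when the selector is correlated with them, as $A$ is here through $V$.

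The fallback fails too. Consistency gives only $\P(Z(a)\le z\mid W,A=a)=\P(Z\le z\mid W,A=a)$. Dropping the conditioning on $A=a$ needs $Z(a)\indep A\mid W$, which is precisely the condition being relaxed. The paper's own proof rests on this same assertion: its opening claim, ``by consistency'', that $\P(Z(a)<z\mid W)=\P_{\Pt_0}(Z<z\mid W,A=a)$ is what licenses its final step. So adopting it reproduces the paper's argument along with its gap.

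In fact the conclusion can fail under the stated hypotheses. Take the following model, with $V,U_A,U_Z$ independent:
\begin{itemize}
\item $W$ constant and $V\sim\mathrm{Bern}(1/2)$;
\item $A=\I\{U_A<0.1+0.8V\}$ with $U_A\sim\mathrm{Unif}(0,1)$;
\item $Z=AU_Z+(1-A)V$ with $U_Z\sim\mathrm{Bern}(1/2)$;
\item $Y=AZ$.
\end{itemize}
Then $Z(1)=U_Z$ and $Z(0)=V$ are both $\mathrm{Bern}(1/2)$, so Assumption~\ref{ass:cond_Z} holds. The following also hold:
\begin{itemize}
\item Assumptions~\ref{ass:restrict_AZ} and \ref{ass:no-V-to-Y};
\item Assumption~\ref{ass:equalE}, since $\E[Y\mid V,A=1,Z]=Z$;
\item the appendix conditions on $U_Y$, trivially;
\item positivity, since $g_A=1/2$, $\P(Z=1\mid A=1)=0.5$ and $\P(Z=1\mid A=0)=0.1$;
\item even Assumption~\ref{ass:original_A1}, since $Y(1)=U_Z\indep A$ and $Y(0)=0$.
\end{itemize}
Yet $\E[Y(1)-Y(0)]=0.5$, whereas $\E[\overline{Q}_Y(W,1,Z)-\overline{Q}_Y(W,0,Z)]=\E[Z]=0.3$. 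The observed $Z$ is $\mathrm{Bern}(0.3)$, not $F=\mathrm{Bern}(1/2)$.

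So an additional assumption is genuinely required. The natural repair of your Step~4 is to impose Assumption~\ref{ass:cond_Z} conditionally on $(W,V)$. If $Z(1)$ and $Z(0)$ share a law $G(\cdot\mid W,V)$, the bracketed term in the display, which equals $\P(Z\le z\mid W,V)$, collapses to $g\,G+(1-g)\,G=G(z\mid W,V)$. Its average over $V$ is $F(z\mid W)$, and your Steps~1--4 then go through. Alternatively, assume $Z(a)\indep A\mid W$, which is what the paper's ``consistency'' remark tacitly uses.
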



\begin{proof}
For $a \in \{0, 1\}$, $\P_{\Pt_{(U,X),0}}(Z(a) < z \mid W) = \P_{\Pt_0}(Z < z
\mid W, A = a)$ by consistency and $\P_{\Pt_0}(Z < z \mid W, A = a) =
\P_{\Pt_0}(Z < z \mid W)$ by~\ref{ass:cond_Z}.
It suffices to show the identification result for the counterfactual mean
$\E_{\Pt_{(U,X),0}}[Y(1)]$, as analogous logic applies to
$\E_{\Pt_{(U,X),0}}[Y(0)]$. Under the assumptions enumerated above, we have that
\begin{align*}
   \E[Y(1)] &= \E[\E[Y(1) \mid W, V]] \\
   &= \E[\E[Y(1) \mid W, V, A=1]]
      \text{ by Assumption~\ref{ass:restrict_UA}, part of ~\ref{ass:equalE}} \\
   &= \E[\E[Y \mid W, V, A=1]] \text{ by consistency} \\
   &= \E[\E[\E[Y \mid W,V, A = 1, Z] \mid W, V, A=1]] \text{ by the law of iterated expectations}\\
   &= \E[\E[\E[Y \mid W, A=1, Z] \mid W,V, A=1]]
      \text{ by Assumption~\ref{ass:equalE}} \\
   &= \E[\E[\E[Y \mid W, A=1, Z(1)] \mid W,V, A=1]]
      \text{ by consistency} \\
   &= \E[\E[\E[Y \mid W, A = 1, Z(1)] \mid W,V]]
      \text{ by Assumption~\ref{ass:restrict_AZ}} \\
   &= \E[\E[Y \mid W, A = 1, Z]]
      \text{ by Assumption~\ref{ass:cond_Z}}.
\end{align*}
\end{proof}
We have established identifiability results for the NDE parameter in
Theorem~\ref{thm:id-NDE}, but with a slight twist: the average of
controlled direct effects w.r.t.~$\P_{\Pt_{(U,X),0}}(Z(0) < z \mid W)$ is
replaced by an average of controlled direct effects
w.r.t.~$\P_{\Pt_0}(Z < z \mid W, A=0)$. By additionally requiring
Assumption~\ref{ass:restrict_AZ}, we obtain a new estimand:
\begin{equation*}
  \E[\E[Y \mid W, A = 1, Z] - \E[Y \mid W, A = 0, Z]].
\end{equation*}
In words, under the assumption that $A$ is independent of $Z$, given baseline
covariates, our estimand for the NDE is equal to the estimand for the ATE. While
the derived estimand is identical to the ATE estimand, we cannot interpret it as
an ATE without Assumption~\ref{ass:cond_Z}. Effectively,
Assumption~\ref{ass:cond_Z} requires that any causal effect that $A$ may have on
$Z$ be statistically indistinguishable given $W$. Since $A$ is not randomized
given $W$, we do not have that $\P_{\Pt_{(U,X),0}}(Z(1) < z \mid W) =
\P_{\Pt_{(U,X),0}}(Z(0) < z \mid W)$ simply by assuming $A$ is independent of
$Z$ given $W$. Consequently, if we are willing to assume that
$\P_{\Pt_{(U,X),0}}(Z(1) < z \mid W) = \P_{\Pt_{(U,X),0}}(Z(0) < z \mid W)$,
then we can use the ATE estimand and also interpret the causal parameter as both
the ATE and the NDE. Intuitively, the natural indirect effect is zero---hence,
the ATE equals the NDE, making both interpretations appropriate.


\section{Statistical Inference} \label{sec:tmle}

We summarize the TMLE procedure derived by \citet{zheng2012}. We split the terms
of the efficient influence function in Equation~\eqref{eq:eif} of the main paper into orthogonal components as follows:
\begin{equation*}
  \begin{split}
    D^\star(O, \Pt)
    & = \color{red} \left(\frac{\I(A=1)}{g_A(W)}\frac{g_Z(Z \mid W,0)}
      {g_Z(Z \mid W,1)} - \frac{\I(A=0)}{1 - g_A(W)} \right)
      \left(Y - \overline{Q}_Y(W, A, Z) \right) \\
    & \qquad + \color{teal} \frac{\I(A=0)}{1 - g_A(W)}
    \left(\overline{Q}_Y(W,1,Z) - \overline{Q}_Y(W,0,Z) -
      \mathbb{E}_{\Pt}\left[\overline{Q}_Y(W, 1, Z) -
      \overline{Q}_Y(W, 0, Z)\big| W, A=0 \right]
    \right)\\
    & \qquad + \color{blue} \mathbb{E}_{\Pt}\left[\overline{Q}_Y(W, 1, Z) -
      \overline{Q}_Y(W,0,Z)\big| W, A=0 \right] \color{black} - \Psi(\Pt) \\
    & = \color{red} D^\star_Y(O, \Pt) \color{black} +
    \color{teal} D^\star_Z(O, \Pt) \color{black} +
    \color{blue} D^\star_W(O, \Pt) \color{black} - \Psi(\Pt) \ .
  \end{split}
\end{equation*}
Let the outcome $Y$ be a binary random variable or a continuous random variable
bounded, without loss of generality, between $(0, 1)$ almost surely (a.s.). We
consider the negative log-likelihood loss function for selecting
$\overline{Q}_Y$. That is,
\begin{equation*}
  L_Y(\overline{Q}_Y)(O) = -\log\left(
    \overline{Q}_Y(W, A, Z)^{Y}\left(1-\overline{Q}_Y(W, A, Z)\right)^{1-Y}
  \right),
\end{equation*}
which leads to the following parametric working submodel~\citep{vdl2011targeted}
for $\overline{Q}_Y$:
\begin{equation*}
  \overline{Q}_Y(\epsilon_1) \equiv \text{expit}\left(
    \text{logit}\left(\overline{Q}_Y\right) +
      \epsilon_1 \left(
        \frac{\I(A=1)}{g_A(W)} \frac{g_Z(Z \mid W,0)}{g_Z(Z \mid W,1)} -
        \frac{\I(A=0)}{1-g_A(W)}
      \right)
  \right)
\end{equation*}
such that
\begin{equation}\label{eq:loss-sol-cond-out}
  \frac{d}{d\epsilon_1} L_Y(\overline{Q}_Y(\epsilon_1))\big|_{\epsilon_1=0}
  = D^\star_Y(O, P).
\end{equation}

For any bounded $\overline{Q}_Y$, it holds that $\overline{Q}_Y(W, Z) =
\overline{Q}_Y(W, 1, Z) - \overline{Q}_Y(W, 0, Z)$ is also bounded. Let
$\overline{Q}_Y(W, Z)$ be bounded between $(0, 1)$ a.s.~without loss of
generality, and define $\psi_Z(W) \coloneqq \mathbb{E}_{P}[\overline{Q}_Y(W, 1,
Z) - \overline{Q}_Y(W, 0, Z)\big| W, A=0]$. Then, under the loss function given
below,
\begin{equation*}
  L_Z(\psi_Z)(O) = -\I(A=0)\log\left(
    \left(\psi_Z(W)\right)^{\overline{Q}_Y(W, Z)}
    \left(1 - \psi_Z(W)\right)^{(1 - \overline{Q}_Y(W, Z))}
  \right),
\end{equation*}
we obtain another logistic working submodel:
\begin{equation*}
  \psi_Z(\epsilon_2) \equiv \text{expit}\left(
    \text{logit}\left(\psi_Z\right) +
    \epsilon_2 \frac{1}{1-g_A(W)}
  \right).
\end{equation*}
Similarly to Equation~\ref{eq:loss-sol-cond-out}, we find that
\begin{equation}\label{eq:loss-sol-atc}
  \frac{d}{d\epsilon_2} L_Z(\psi_Z(\epsilon_2))\big|_{\epsilon_2=0}
  = D^\star_Z(O, \Pt).
\end{equation}

We can now describe the estimation procedure under the statistical model
described in Section~\ref{statistical_model} of the main text, with the added assumption that $Y$ is a bounded random variable between
$(0, 1)$. If $Y$ is not bounded between $(0, 1)$, then it suffices to transform
it to the unit interval by rescaling using $\{\min_{\Pt_n}(Y),
\max_{\Pt_n}(Y)\}$. Now, let $g_{A,n}$, $\overline{Q}_{Y,n}$ and $g_{Z,n}$ be
initial estimators of $g_{A,0}$, $\overline{Q}_{Y,0}$ and $g_{Z,0}$,
respectively, and let
\begin{equation*}
  \epsilon_{1,n} = \argmin_{\epsilon \in \R}
    \Pt_n L_Y\left(\overline{Q}_{Y,n}(\epsilon)\right)
\end{equation*}
such that $\overline{Q}_{Y,n}^\star \equiv \overline{Q}_{Y,n}(\epsilon_{1, n})$.
The univariate logistic regression model is used to tilt the initial estimates
such that $\Pt_n D_Y^\star(O, g_{A,n}, \overline{Q}_{Y,n}^\star, g_{Z,n})
\approx 0$~\citep[Chapter 5]{vdl2011targeted}. Denoting the estimator of
$\psi_Z(\Pt_0)$ by $\psi_{Z,n}$, we let
\begin{equation*}
  \epsilon_{2,n} = \argmin_{\epsilon \in \R}
    \Pt_n L_Z\left(\psi_Z(\epsilon)\right)
\end{equation*}
such that $\psi_{Z,n}^\star \equiv \psi_{Z,n}(\epsilon_{2,n})$. By the same
property of the loss function invoked previously, we obtain $\Pt_n D_Z^\star(O,
g_{A,n}, \overline{Q}_{Y,n}^\star, g_{Z,n}^\star) \approx 0$. The nonparametric
estimator of $p_{W,0}$, $p_{W,n}$ does not require the use of a targeting
procedure since it solves the relevant score equation already. That is,
$\Pt_n D_W^\star(O, \overline{Q}_{Y,n}^\star, \psi_Z^\star, p_{W,n}) = 0$.

Plugging in the targeted estimators of the nuisance parameters into the
substitution mapping provides the TMLE:
\begin{equation*}
  \Psi_\text{NDE}(\Pt_n^\star) = \frac{1}{n}\sum_{i=1}^n
  \psi_{Z,n}^\star(\overline{Q}_{Y,n}^\star)(W_i).
\end{equation*}
It follows from the above that plug-in bias term in
Equation~\eqref{eq:von-mises} of the main text converges to zero in probability.

\section{Efficient Inference Under Two-Phase Sampling
Designs}\label{sec:two-phase}

\subsection{The Efficient Influence Function}

In some applications, such as the immune correlates analysis of the CoVPN 3008
study considered in Section~\ref{sec:applic-covpn3008} of the main text, the mediator is measured in a subset of the sample,
which is selected by way of an \textit{outcome-dependent} sampling procedure
(n.b., such sampling techniques are generally termed ``two-phase'' designs). The
observed data unit is then represented by $O = (W, A, R, RZ, Y) \sim \Pt_0$.
Here, $W$, $A$, $Z$ and $Y$ have the same definitions as in
Section~\ref{statistical_model}, while the additional random variable, $R$,
is an indicator of $Z$ being measured. When $Z$ is not measured---that is, when
$R=0$---we assume $Z=0$, though this does not have any influence on downstream
inferential developments. Finally, parameters of $\Pt_{(U,X),0}$ are
identifiable using parameters of this modified $\Pt_0$ when the conditions of
Section~\ref{sec:identification} are satisfied and when the conditional
probability of measuring $Z$, denoted by $g_{R}(W,A,Y) = \P_{\Pt_0}[R=1 \mid
W,A,Y]$, meets the following positivity constraint: $\epsilon_R < g_{R}(W,A,Y)$
a.s.~for $\epsilon_R > 0$; note that this condition generally holds \textit{by
design} since the sampling procedure used to measure $Z$ is controlled by the
experimenter and should be specified explicitly at the study's design stage.

We again rely on the EIF to derive regular and asymptotically linear estimators.
A general mapping from the EIF of a pathwise differentiable target parameter
into the EIF of the same parameter under a two-phase sampling design was
provided by~\citet{vdl2011targeted}, with the aim of constructing a TML
estimator suitable for data produced under such designs; however, these authors
stopped short of interrogating or implementing their proposal. Later,
\citet{hejazi2020efficient} built on this result by adapting it to evaluating
the counterfactual mean of a modified treatment
policy~\citep{haneuse2013estimation, young2014identification,
diaz2018stochastic} subject to two-phase sampling of the exposure; these authors
proposed both a TML estimator and a one-step estimator and studied the
asymptotic efficiency properties of the resultant estimators when $g_R(W,A,Y)$
is flexibly estimated (and in a rate-optimal way). Somewhat anachronistically,
an earlier version of a similar result for asymptotically efficient estimation
under two-phase sampling was provided
by~\citet[see Lemma 6.1]{robins1994estimation} but seems to have escaped notice
in the two-phase sampling literature. We apply this procedure to the NDE, obtaining the following EIF:
\begin{equation*}
  D^F(O, \Pt) = \frac{\mathbb{I}(R=1)}{g_{R}(W,A,Y)} D^{\star}_R(O, \Pt)
  - \frac{\E_{\Pt}\left[D^{\star}_R(O, \Pt) \mid R=1,W,A,Y \right]}
  {g_{R}(W,A,Y)}(R - g_{R}(W,A,Y)) \ .
\end{equation*}
In the above equation, $D^F$ is the EIF of the population-level (or
``full-data'') parameter $\Psi(a, a^{\star}; \Pt_0)$ while $D^{\star}_R$ is the
EIF of the same target parameter at the distribution induced by the two-phase
sampling process (that is, in the phase-two sample $(W, A, Z, Y)$, for whom
$R = 1$), which may be expressed as $P \in \M_R = \{\Pt_{\Pt_0,g_{0,R}}: \Pt_0
\in \M, g_{0,R}\}$, where $g_{0,R}$ is the underlying two-phase sampling
mechanism. The EIF of the NDE or RR NDE can then be obtained by application of the
functional delta method.


\subsection{Estimation Algorithms}

The one-step estimator and TMLE of the NDE and RR NDE derived using the above
formula require a thoughtful (and involved!) estimation strategy. We present
cross-fitted algorithms for constructing both estimators of $\Psi(a, a^{\star};
P_{0})$ under a two-phase sampling design, assuming $g_{R}$ to be known (as in
the case-cohort sampling used in the CoVPN vaccine efficacy
trials~\citep{gilbert2021covpn}) and that the outcome is bounded between $0$ and
$1$ without loss of generality. Again, we highlight that estimators of the NDE
and RR NDE can be constructed from these estimators using the functional delta
method.

We note that \citet{benkeser2021inference} outline conditions under which these
estimators are multiply robust, asymptotically linear, and asymptotically
efficient. We invite interested readers to review their work for more
information about these estimators' asymptotic behavior.

\subsubsection{One-Step Estimator}

\begin{enumerate}
  \item Split the data into $K$ folds of approximately equal size, indexing them
        by $k=1, \ldots, K$.
  \item Combine folds $2$ through $K$ to form a training dataset, and let fold
        $1$ be the validation set. Let $P_{n}^{K-1}$ denote the
        empirical distribution of the training data, and $\Pt_{n}^{1}$ the
        empirical distribution of the validation data. Then,
    \begin{enumerate}
      \item Estimate $g_{A}(W, a)$ and $g_{A}(W, a^{\star})$, the conditional
            treatment assignment mechanisms of $A = \{a, a^{\star}\}$,
            respectively, conditioning on confounders, using all observations in
            the training dataset. This is achieved by regressing the confounders
            on the treatment indicator of $a$ and $a^{\star}$, respectively.
            Denote the resultant estimates by $g_{A,n}(W, a)$ and $g_{A,n}(W,
            a^\star)$.
      \item Estimate $g_{A}(W,a,Z)$ and $g_{A}(W,a^{\star},Z)$, the conditional
            treatment assignment mechanisms of $a$ and $a^{\star}$,
            respectively, conditioning on confounders and mediators, using the
            subset of training observations with measured mediators (for whom $R
            = 1$) and using two-phase sampling weights (assumed known). This is
            done by regressing the confounders and mediators on the treatment
            indicators $a$ and $a^\star$, respectively, \textit{using an
            estimation procedure capable of incorporating weights} (i.e., by
            re-weighting an appropriate loss function). We stress that
            regression procedures incapable of incorporating the two-phase
            sampling weights are incompatible with this step. Denote the
            resultant estimates by $g_{A, n}(W,a,Z)$ and $g_{A,
            n}(W,a^{\star},Z)$.
      \item Estimate $\overline{Q}_{Y}(W,A,Z)$, the conditional expected outcome,
            conditioning on the treatment, confounders, and mediators, using the
            subset of training observations with measured mediators (for whom $R
            = 1$) and using the two-phase sampling weights (assumed known). This
            is done by regressing the treatment indicator, confounders, and
            mediators on the outcome. As with the preceding step, the estimation
            procedure must be capable of incorporating the two-phase sampling
            weights. Denote the estimate by $\overline{Q}_{n}(W,A,Z)$.
      \item Estimate $v_{Y}(W, A, a) = \E[\overline{Q}_{n}(W, a, Z) \mid W, A]$,
            the conditional expected pseudo-outcome under treatment level $a$,
            conditioning on the treatment and confounders, by regressing the
            treatment indicators and confounders on $\overline{Q}_{n}(W, a, Z)$
            using the subset of training data with measured mediators (for whom
            $R = 1$). Note that this regression step \textit{should ignore} the
            two-phase sampling weights, as these have been accounted for in
            preceding steps. Let the estimate be given by $v_{Y,n}(W, A, a)$.
      \item Compute the plug-in estimator
            $\psi_{n} = \E_{\Pt_{n}^{K-1}}[v_{Y,n}(W, a^{\star}, a)]$.
      \item Compute the empirical EIF, denoted by $D_{n}^{\star}$,
            using the training dataset's observations who have measured
            confounders and the nuisance parameter estimates. For any given
            observation $O$, their corresponding entry in $D_{n}^{\star}$ is
            given by
            \begin{align*}
              D^{\star}_{a, a^{\star}}(O, \Pt_{n}^{k-1})
              & = \frac{\I(A=a)}{\cancel{g_{A,n}(W, a)}}
              \frac{\cancel{g_{A,n}(W, a)}}{g_{A,n}(W, a^\star)}
              \frac{g_{A,n}(W,a^\star,Z)}{g_{A,n}(W,a,Z)}
                \left(Y-\overline{Q}_{Y,n}(W,a,Z)\right) \\
              & \quad + \frac{\I(A = a^\star)}{g_{A,n}(W,a^\star)}
                \left(\overline{Q}_{Y,n}(W, a, Z) - v_{Y,n}
                (W,a^{\star}, a)\right) \\
              & \quad + v_{Y,n}(W,a^{\star}, a) - \psi_{n} \; .
            \end{align*}
      \item Estimate the conditional EIF, $\E_{P_{0}}[D^*_{a, a^{\star}}(O,
            \Pt_0) \mid W,A,Y]$, by regressing the confounders, treatment
            indicator, mediators, and outcome of the training data with measured
            mediators (for whom $R = 1$) on $D_{n}^{\star}$. Denote the
            estimate by $d_{n}(W,A,Y)$. The empirical full-data EIF is then
            given by
            \begin{equation*}
              D^{F}_{n}(O, \Pt_{n}^{K-1}) = \frac{\I(R=1)}{g_{R}(W,A,Y)}
              D^{\star}_{a, a^{\star}}(O, \Pt_{n}^{K-1})
              - \left(\frac{d_{n}(W,A,Y)}{g_{R}(W,A,Y)}\right)
              (R - g_{R}(W,A,Y)) \ .
            \end{equation*}
      \item The full-data EIF can now be predicted for all observations in the
            validation sample using the EIF $D^{F}_{n}(O, \Pt_{n}^{K-1})$
            estimated in the training sample. This EIF is $D^{F}_{n, 1}$.
      \item The un-centered full-data EIF is given by
            $\widetilde{D}^{F}_{n, 1} = D^{F}_{n, 1} + \psi_{n}$.
    \end{enumerate}
  \item Repeat the previous step $K-1$ more times, changing the training and
        validation samples each time (until the index set $k = 1, \ldots, K$
        has been exhausted), and concatenate the $K$ vectors of
        un-centered predicted full-data EIFs
        $\widetilde{D}^{F}_{n} = (\widetilde{D}^{F}_{n, 1},
        \widetilde{D}^{F}_{n, 2}, \ldots, \widetilde{D}^{F}_{n, K})$. This is a
        vector of $n$ scalars, one for every observation in the dataset.
  \item The cross-fitted one-step estimator is then given by
       $\psi_{n}^{\text{CV-OS}} = \E_{\Pt_n}[\widetilde{D}^{F}_{n}]$. Its
       variance is $\mathbb{V}_{\Pt_n}[\widetilde{D}^{F}_{n}] / n$.
\end{enumerate}

\subsubsection{Targeted Maximum Likelihood Estimator}

\begin{enumerate}
  \item Split the data into $K$ folds of approximately equal size, indexing them
        by $k=1, \ldots, K$.
  \item Combine folds $2$ through $K$ to form a validation dataset, and let fold
        $1$ be the validation set. Let $P_{n}^{K-1}$ denote the
        empirical distribution of the training data, and $P_{n}^{1}$ the
        empirical distribution of the validation data. Then,
    \begin{enumerate}
      \item Estimate $g_{A}(W, a)$ and $g_{A}(W, a^{\star})$, the conditional
            treatment assignment mechanisms of $A = \{a, a^{\star}\}$,
            respectively, conditioning on confounders, using all observations in
            the training dataset. This is achieved by regressing the confounders
            on the treatment indicator of $a$ and $a^{\star}$, respectively.
            Denote the resultant estimates by $g_{A,n}(W, a)$ and $g_{A,n}(W,
            a^\star)$.
      \item Estimate $g_{A}(W,a,Z)$ and $g_{A}(W,a^{\star},Z)$, the conditional
            treatment assignment mechanisms of $a$ and $a^{\star}$,
            respectively, conditioning on confounders and mediators, using the
            subset of training observations with measured mediators (for whom $R
            = 1$) and using two-phase sampling weights (assumed known). This is
            done by regressing the confounders and mediators on the treatment
            indicators $a$ and $a^\star$, respectively, \textit{using an
            estimation procedure capable of incorporating weights} (i.e., by
            re-weighting an appropriate loss function). We stress that
            regression procedures incapable of incorporating the two-phase
            sampling weights are incompatible with this step. Denote the
            resultant estimates by $g_{A, n}(W,a,Z)$ and $g_{A,
            n}(W,a^{\star},Z)$.
      \item Estimate $\overline{Q}_{Y}(W,A,Z)$, the conditional expected outcome,
            conditioning on the treatment, confounders, and mediators, using the
            subset of training observations with measured mediators (for whom $R
            = 1$) and using the two-phase sampling weights (assumed known). This
            is done by regressing the treatment indicator, confounders, and
            mediators on the outcome. As with the preceding step, the estimation
            procedure must be capable of incorporating the two-phase sampling
            weights. Denote the estimate by $\overline{Q}_{n}(W,A,Z)$.
      \item Estimate $v_{Y}(W, A, a) = \E[\overline{Q}_{n}(W, a, Z) \mid W, A]$,
            the conditional expected pseudo-outcome under treatment level $a$,
            conditioning on the treatment and confounders, by regressing the
            treatment indicators and confounders on $\overline{Q}_{n}(W, a, Z)$
            using the subset of training data with measured mediators (for whom
            $R = 1$). Note that this regression step \textit{should ignore} the
            two-phase sampling weights, as these have been accounted for in
            preceding steps. Let the estimate be given by $v_{Y,n}(W, A, a)$.
      \item Compute the plug-in estimator
            $\psi_{n} = \E_{\Pt_{n}^{K-1}}[v_{Y,n}(W, a^{\star}, a)]$.
      \item Compute the empirical EIF, denoted by $D_{n}^{\star}$,
            using the training dataset's observations who have measured
            confounders and the nuisance parameter estimates. For any given
            observation $O$, their corresponding entry in $D_{n}^{\star}$ is
            given by
            \begin{align*}
              D^{\star}_{a, a^{\star}}(O, \Pt_{n}^{k-1})
              & = \frac{\I(A=a)}{\cancel{g_{A,n}(W, a)}}
              \frac{\cancel{g_{A,n}(W, a)}}{g_{A,n}(W, a^\star)}
              \frac{g_{A,n}(W,a^\star,Z)}{g_{A,n}(W,a,Z)}
                \left(Y-\overline{Q}_{Y,n}(W,a,Z)\right) \\
              & \quad + \frac{\I(A = a^\star)}{g_{A,n}(W,a^\star)}
                \left(\overline{Q}_{Y,n}(W, a, Z) - v_{Y,n}(W,a^{\star}, a)\right) \\
              & \quad + v_{Y,n}(W,a^{\star}, a) - \psi_{n} \; .
            \end{align*}
      \item Estimate the conditional EIF, $\E_{\Pt_0}[D^*_{a, a^{\star}}(O,
            \Pt_0) \mid W,A,Y]$, by regressing the confounders, treatment
            indicator, mediators, and outcome of the training data with measured
            mediators (for whom $R = 1$) on $D_{n}^{\star}$. Denote the
            estimate by $d_{n}(W,A,Y)$. The empirical full-data EIF is then
            given by
            \begin{equation*}
              D^{F}_{n}(O, \Pt_n^{K-1}) = \frac{\mathbb{I}(R=1)}{g_{R}(W,A,Y)}
              D^{\star}_{a, a^{\star}}(O, \Pt_n^{K-1})
              - \left(\frac{d_{n}(W,A,Y)}{g_{R}(W,A,Y)}\right)
              (R - g_{R}(W,A,Y)) \ .
            \end{equation*}
      \item The full-data EIF can now be predicted for all observations in the
            validation sample using the EIF $D^{F}_{n}(O, \Pt_{n}^{K-1})$
            estimated in the training sample. This EIF is $D^{F}_{n, 1}$.

      \item Compute the relevant score equations:
            \begin{enumerate}
              \item The two-phase sampling weights score is given by:
                    \begin{equation*}
                      \E_{\Pt_n^1}
                      \left[\frac{d_{n}(W,A,Y)}{g_{R}(W,A,Y)}
                        (\mathbb{I}(R = 1) - g_{R}(W,A,Y)) \right] \ .
                    \end{equation*}

              \item The conditional expected outcome score is given by:
                    \begin{equation*}
                      \E_{\Pt_{n}^{1}}\left[
                      \frac{\mathbb{I}(R = 1)}{g_{R}(W,A,Y)}
                      \frac{\mathbb{I}(A = a)}{\cancel{g_{A,n}(W,a)}}
                      \frac{\cancel{g_{A,n}(W,a)}}{g_{A,n}(W,a^{\star})}
                      \frac{g_{A,n}(W,a^{\star},Z)}{g_{A,n}(W,a,Z)}
                      (Y - \overline{Q}_{Y,n}(W,a,Z)) \right] \ .
                    \end{equation*}
            \end{enumerate}

      \item Repeat the following steps until the score equations are
            approximately equal to zero:
            \begin{enumerate}
              \item Optimizing the working submodel
                    $g_{R,n}(\epsilon) = \text{expit}(\text{logit}(1/g_{R}) +
                    \epsilon d_{n} / g_{R})$ with respect to $\epsilon$ using
                    a logistic regression model of the two-phase sampling
                    indicator, we obtain the updated $g_{R,n}^{\star}(W, A, Y)
                    = g_{R}(\epsilon_{n})$. Here, $\epsilon_{n}$ is the
                    estimated coefficient obtained by fitting the logistic
                    regression model. $g_{R}(W,A,Y)$ is then replaced by
                    $g_{R,n}^{\star}(W,A,Y)$ in the following iteration.
              \item Using a similar working submodel
                    $\overline{Q}_{Y,n}(\epsilon) = \text{expit}(\text{logit}
                    (\overline{Q}_{Y,n}(W, A)) + \epsilon (g_{A,n}(W,a)g_{A,n}
                    (W,a^{\star},Z))/ (g_{A,n}(W,a^{\star})g_{A,n}(W,a,Z))$
                    of $Y$, we obtain $\epsilon_{n}$ by fitting a weighted
                    logistic regression. The weights are given by
                    $1/(g_{A,n}(W, a)g_{R,n}^{\star}(W,A,Y))$. An updated
                    conditional expected outcome estimate is then given by
                    $\overline{Q}_{Y,n}^{\star}(W,A,Z) =
                    \overline{Q}_{Y,n}(\epsilon_{n})$.
                    $\overline{Q}_{Y,n}(W,A,Z)$ is replaced by
                    $\overline{Q}_{Y,n}^{\star}(W,A,Z)$ in the next iteration.
              \item Update the scores given in the previous step, replacing
                    $g_{R}$ and $\overline{Q}_{Y,n}$ by $g^{\star}_{R,n}$ and
                    $\overline{Q}_{Y,n}^{\star}$, respectively.
            \end{enumerate}
      \item $v_{Y,n}(W, A, a)$ is updated by fitting the following working
            submodel,
            $v_{Y,n}(\epsilon) = \text{expit}(\text{logit}(v_{Y,n}(W,A,a)))$,
            using a weighted logistic regression whose outcome is
            $\overline{Q}_{Y,n}(W,A,Z)$. The weights are equal to
            $I(A=a^\star)/(g_{A,n}(W, a^{\star})g_{R,n}^{\star}(W,A,Y))$. The
            updated $v_{n}(W, A, a)$ is given by
            $v_{Y,n}^{\star}(W, A, a) = v_{Y,n}(\epsilon_{n})$, where
            $\epsilon_{n}$ is the estimated coefficient obtained by fitting the
            logistic regression to the submodel.
      \item Compute the TML estimate over the validation fold:
            $\psi_{n,1}^{\text{TML}} = \E_{P_{n}^{1}}
            [v_{Y,n}^{\star}(W, a^{\star}, a)]$.

    \end{enumerate}
  \item Repeat the previous step $K-1$ more times, changing the training and
        validation samples each time (until the index set $k = 1, \ldots, K$
        has been exhausted), and concatenate the $K$ vectors of predicted
        full-data EIFs $D^{F}_{n} = (D^{F}_{n, 1}, D^{F}_{n, 2}, \ldots,
        D^{F}_{n, K})$. This is a vector of $n$ scalars, one for every
        observation in the dataset.
  \item The cross-validated TML estimator is then given by
     $\psi_{n}^{\text{CV-TML}} = \sum_{k=1}^{K} \psi_{n,k}^{\text{TML}} / K$.
        Its variance is $\mathbb{V}_{\Pt_{n}}[D^{F}_{n}] / n$.
\end{enumerate}

\newpage
\FloatBarrier

\bibliography{refs}

\end{document}